\documentclass[10pt,letterpaper]{article}
%


%

%
\usepackage{cite}

%
   \usepackage[pdftex]{graphicx}
   \DeclareGraphicsExtensions{.pdf,.jpeg,.jpg,.png}
%
%

%
\usepackage[cmex10]{amsmath}
\usepackage{amssymb}
\usepackage[tight,footnotesize]{subfigure}
\usepackage{url}


\usepackage[T1]{fontenc}
\usepackage[utf8]{inputenc}

\usepackage{authblk}

\usepackage{amsthm}

\hyphenation{op-tical net-works semi-conduc-tor}

\begin{document}
%
\title{\Large \bf Fractal analysis of resting state functional connectivity of the brain}

\author[1]{\fontsize{9}{10}\selectfont Wonsang You\thanks{you@lin-magdeburg.de}}
\author[2]{Sophie Achard\thanks{sophie.achard@gipsa-lab.inpg.fr}}
\author[1]{J\"{o}rg Stadler}
\author[1]{Bernd Br\"{u}ckner}
\author[3]{Udo Seiffert}
\affil[1]{Leibniz Institute for Neurobiology, Magdeburg, Germany}
\affil[2]{GIPSA-lab, CNRS, UMR 5216, Grenoble, France}
\affil[3]{Biosystems Engineering, Fraunhofer-Institute for Factory Operation and Automation, Magdeburg, Germany}


%


\newtheorem{theorem}{Theorem}

\maketitle

\begin{abstract}
A variety of resting state neuroimaging data tend to exhibit fractal behavior where their power spectrums follow power-law scaling. Resting state functional connectivity is significantly influenced by fractal behavior which may not directly originate from neuronal population activities of the brain. To describe the fractal behavior, we adopted the fractionally integrated process (FIP) model instead of the fractional Gaussian noise (FGN) since the FIP model covers more general aspects of fractality than the FGN model. This model provides a theoretical basis for the dependence of resting state functional connectivity on fractal behavior. Inspired by this idea, we introduce a novel concept called the \textit{nonfractal connectivity} which is defined as the correlation of short memory independent of fractal behavior, and compared it with the fractal connectivity which is an asymptotic wavelet correlation. We propose several wavelet-based estimators of fractal connectivity and nonfractal connectivity for a multivariate fractionally integrated noise (mFIN). These estimators were evaluated through simulation studies and applied to the analyses of resting state fMRI data of the rat brain.
\end{abstract}


%

\section{Introduction\label{section:introduction}}
The dynamics of endogenous neuronal activities has been an important issue in neuroscience since it is supposed to take control of most neuronal activities arising in the brain \cite{Fox2007a}. The huge default-mode functional network of the brain has been usually investigated through resting state neuroimaging data such as electroencephalography (EEG) and functional magnetic resonance imaging (fMRI) \cite{Laufs2003,DeLuca2006,Musso2010,Deco2011}. One of the major goals in resting state neuroimaging research is the reliable inference of functional dynamics of spontaneous neuronal population activities from resting state neuroimaging data. However, it is not straightforward since resting state signals may be significantly affected by non-neuronal physiological factors. On the other hand, the response to stimulation in task-based experimental paradigm is prominently correlated with brain activities either directly or indirectly.

One of the non-neuronal obstacles in resting state neuroimaging studies is the \textit{fractal behavior} (or long-range dependence) where the power spectrum tends to exhibit $1/f^\alpha$ power law scaling across low frequencies. This phenomenon has been observed through a number of resting state neuroimaging studies \cite{Zarahn1997b,Stam2004,VandeVille2010,Expert2011}. As such a phenomenon has been ubiquitously observed in nature, the fractal behavior in neuroimaging data may also arise from various mediators such as respiration \cite{Cordes2001a,Birn2006}, cardiac fluctuations \cite{West1999}, system noise, hemodynamics (in the case of fMRI) as well as neuronal activities \cite{Teich1997,Mazzoni2007,Allegrini2009}.

The classical model of fractal behavior or long memory in baseline neuroimaging signals has been the \textit{fractional Gaussian noise} (FGN) which is defined as an increment process of fractional Brownian motion (FBM) and completely characterized by both Hurst exponent and variance \cite{Beran1994}. The FGN model has been adopted to various fractal-based analyses of fMRI data for a decade so as to account for scale-free dynamics of neuroimaging signals \cite{Bullmore2001,Bullmore2004a,Maxim2005,Achard2006,Wink2008}. 

However, there is a controversy about whether the FGN is the most appropriate model for resting state neuroimaging signals among a variety of long memory models. While the FGN model is defined just with two parameters under mathematically strict conditions of self-similarity, a neuroimaging signal is produced from a nonlinear biological system which is controlled by numerous hidden parameters. In this reason, the \textit{fractionally integrated process} (FIP) model, based on the concept of fractional differencing \cite{Hosking1981a}, is worth consideration as an alternative to the FGN model since it embraces diverse types of long memory. Indeed, the FGN is regarded as a special type of the FIP model which is more extensive than FGN. 

In this paper, we adopted the fractionally integrated process (FIP) model to effectively describe the fractal behavior of neuroimaging signals. In the FIP model, a neuroimaging signal is represented as the output of a long memory (LM) filter whose input is a nonfractal signal (sometimes called \textit{short memory} as a notion corresponding to long memory). In other words, a nonfractal signal is transformed into a neuroimaging signal with fractal behavior through long memory filtering as shown in Fig. \ref{fig:lmfilter}, which indicates that the fractal behavior is attributed not to the nonfractal input but to the LM filter. The influence of several factors on the fractal behavior can be well aggregated in terms of a sequence of long memory filters. 

The FIP model sheds light on the influence of fractal behavior on functional connectivity. The correlation of resting state neuroimaging signals between two brain regions may significantly differ from that of the nonfractal input signals according to the difference of memory parameters. Hence, the ordinary correlation of resting state neuroimaging data may not well reflect the functional dynamics of spontaneous neuronal activities due to the fractal behavior. 

\begin{figure}[!t]
\centering
\includegraphics[width=8cm]{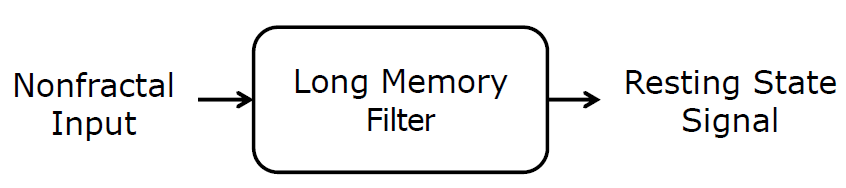}
\caption{The FIP model representation of resting state neuroimaing signals.}
\label{fig:lmfilter}
\end{figure}

The theoretical expectation that functional connectivity may be influenced by fractal behavior leads us to take into account the correlation of fractal-free input signals as a novel concept of resting state functional connectivity while the Pearson correlation of neuroimaging signals has been the most popular definition of functional connectivity. This particular correlation, which is independent of fractal behavior, is called the \textit{nonfractal connectivity}. Its mathematical description is provided in section \ref{section:nonfractal-conn}. The nonfractal connectivity is not exactly identical to the correlation of spontaneous neuronal population activities due to the nonlinearity of neurophysiological systems. However, it may give us better information on correlation structure of spontaneous neuronal populations than ordinary correlation of neuroimaging data since it eliminates the distortion of functional connectivity due to fractal behavior. 

The nonfractal connectivity is comparable to \textit{fractal connectivity} which was first proposed in \cite{Achard2008} as the asymptotic value of wavelet correlations over low frequency scales. The wavelet correlations of two long memory processes are converged on a specific value which is determined by the correlation of short memory parts as well as memory parameters. We show the theoretical relevance of nonfractal connectivity to fractal connectivity.

In this paper, we propose three wavelet-based approaches to estimating both nonfractal connectivity and fractal connectivity: (1) the SDF(Spectral density function)-based method, (2) the covariance-based method, and (3) the linearity-based method. As prerequisite to estimating these connectivities, memory parameters should be estimated \textit{a priori}. We tested two wavelet-based univariate estimators of memory parameter: the wavelet-based least-mean-squares (LMS) method and the wavelet-based maximum likelihood (ML) method. The performance of all proposed estimators was verified through simulation studies. We also show an example of applying these estimators to resting state fMRI data taken from anesthetized rat brains to estimate nonfractal connectivity.

This paper is organized as follows. In section \ref{section:long-memory-model}, the FIP model of resting state neuroimaging signals is briefly introduced, and the concepts of nonfractal connectivity and fractal connectivity are described in section \ref{section:nonfractal-conn}. The wavelet-based estimators of both connectivities are proposed in section \ref{section:estimation}. The results of simulation studies and experiments in fMRI data are provided in section \ref{section:simulation} and \ref{section:fmri}.

\section{Long memory model\label{section:long-memory-model}}

Fractal properties of a time series can be modeled as long memory models such as FGN, ARFIMA, and GARCH processes. In this section, we especially introduce both univariate and multivariate fractionally integrated processes (FIP) that encompass several classes of long memory such as fractionally integrated noise (FIN), FGN, and ARFIMA \cite{Granger1980,Hosking1981a}.

\subsection{Univariate case}
Let $ x(t) $ be a real-valued discrete process of length $ N $ given by
\begin{equation}
u(t)=\left ( 1-L \right )^{d}x(t)
\end{equation}
where $d \in \mathbb{R}$, $L$ denotes the back-shift operator, and $u(t)$ (called \textit{short memory}) is a stationary process whose spectral density $ {S}_u\left ( f \right ) $ is a non-negative symmetric function bounded on $ \left ( -\pi, \pi \right ) $ and bounded away from zero at the origin. $ x(t) $ can be represented as the convolution of $u(t)$ with the long memory (LM) filter $g(t)$ as follows
\begin{equation}
x(t) = \sum_{\tau=0}^{\infty}g(\tau)u(t-\tau)
\end{equation}
where
\begin{equation}\label{eq:long-memory-filter}
g(t) := \frac{d\Gamma (d+t)}{\Gamma (d+1)\Gamma (t+1)}. 
\end{equation}
If $-1/2 < d <1/2$, the spectral density of $x(t)$ is given by
\begin{equation}
S(f)= \left | 1-e^{-i f} \right |^{-2d}{S}_u(f) \label{stationarylm}.
\end{equation}
The fractal behavior is controlled by the memory parameter $d$. If $ 0<d<1/2 $, the process $x(t)$ is said to be a stationary long memory process with memory parameter $d$ while $x(t)$ is nonstationary if $d>0.5$. If $d=0$, the process becomes a white noise.

\subsection{Multivariate case}
The definition of the univariate long memory model can be extended to the multivariate case. Consider a real-valued q-vector process $\mathbf{X}(t)$ given by
\begin{equation}\label{eq:order-diff}
\left(
\begin{array}{ccc}
(1-L)^{d_1} &  & 0 \\
 & \ddots &  \\
0 &  & (1-L)^{d_q} 
\end{array} \right)
\left(
\begin{array}{c}
X_1(t) \\
 \vdots   \\
X_q(t) 
\end{array} \right)
=
\left(
\begin{array}{c}
U_1(t) \\
 \vdots   \\
U_q(t) 
\end{array} \right),
\end{equation}
where $ \mathbf{U}(t)=\left ( U_1(t),...,U_q(t) \right ) $ is a multivariate stationary process whose spectral density $\mathbf{S}(f)=\left[ S_{m,n}(f) \right]$ is bounded on $ \left ( -\pi, \pi \right ) $ and bounded away from zero at the origin. For $ -1/2 < d_k < 1/2 $, the spectral density of $ \mathbf{U} $ is given by
\begin{equation}\label{eq:multi-sdf}
\mathbf{S}\left ( f \right )=\Phi \left ( f \right ) \mathbf{S}_u\left ( f \right )\Phi^{*} \left ( f \right )
\end{equation}
where 
\begin{equation}
\Phi \left ( f \right ) =
\left(
\begin{array}{ccc}
(1-e^{if})^{-d_1} &  & 0 \\
 & \ddots &  \\
0 &  & (1-e^{if})^{-d_q} 
\end{array} \right).
\end{equation}
In the case of $0<d_k<1/2$ for $ 1\leq k\leq q $ for $ 1\leq k\leq q $, $\mathbf{X}(t)$ is said to be a stationary long memory process with memory parameter $\mathbf{d}=(d_1,\cdots,d_q)$. If $\mathbf{U}(t)$ is a vector ARMA process, $ \mathbf{X}(t) $ becomes a multivariate ARFIMA process. On the other hand, if $\mathbf{U}(t)$ is a vector \textit{i.i.d.} random variable, i.e.
\begin{equation}\label{eq:zt-iid}
\mathbf{U}(t) \overset{\underset{\mathrm{i.i.d.}}{}}{\sim } N\left ( 0,\Sigma _u \right ),
\end{equation}
$ \mathbf{X}(t) $ becomes a multivariate fractionally integrated noise (mFIN). In this case, the cross-spectral density of $x_m(t)$ and $x_n(t)$ is given by
\begin{equation}\label{eq:cross-sdf}
S_{m,n}\left ( f \right )=\gamma_{m,n}\left ( 1-e^{if} \right )^{-d_m}\left ( 1-e^{-if} \right )^{-d_n}
\end{equation}
where $ \gamma_{m,n} $ is identical to the $ (m,n) $-th element of $ \Sigma _u $.


\section{Nonfractal and fractal connectivity\label{section:nonfractal-conn}}

As discussed in the section \ref{section:introduction}, the most popular definition of functional connectivity has been the Pearson correlation. The multivariate long memory model introduced in the section \ref{section:long-memory-model} additionally provides two novel definitions of resting state functional connectivity: fractal connectivity and nonfractal connectivity. While fractal connectivity is defined based on the asymptotics of wavelet correlation, nonfractal connectivity is defined based on the covariance of short memory.

\subsection{Nonfractal connectivity}

Let $ \mathbf{X}(t) $ be an mFIN process with memory parameter $\mathbf{d}$, and $ \mathbf{U}(t)$ be a short memory function of $ \mathbf{X}(t) $ given in \eqref{eq:order-diff}. The \textit{nonfractal connectivity} of $x_m(t) $ and $x_n(t) $ is defined as
\begin{equation}\label{eq:nonfractal-con-def}
D_{m,n} = \frac{\gamma _{m,n}}{\sqrt{\gamma_{m,m}\gamma_{n,n}}}
\end{equation}
where $\gamma _{m,n}$ denotes the covariance of $u_m(t)$ and $u_n(t)$; that is, $\gamma _{m,n}:=\mathbb{E}\left [ u_m(1)u_n(1) \right ]$.

\subsection{Fractal connectivity}

The variance of a discrete time series can be decomposed over several frequency bands (called scales) through the discrete wavelet transform (DWT). Let $W_{i}(j,k)$ be the wavelet coefficient of the $i$th process $x_i(t) $ at scale $j$ and time point $k$. The wavelet covariance is defined as $\nu _{m,n}(j) :=\text{cor} \left ( W_m(j,k),W_n(j,k) \right ) $ at scale $j$. Since the wavelet coefficients of an FIP at a scale $j$ is covariance stationary, $\nu _{m,n}(j)$ is independent of time $t$. Let $\mathcal{H}_{j}(f) $ be the squared gain function of the wavelet filter such that
\begin{equation}
\mathcal{H}_{j}(f)\approx \left\{\begin{matrix}
2^j & \text{for } 2\pi/2^{j+1}\leq \left | f \right | \leq 2\pi/2^{j}\\ 
0 & \text{otherwise} 
\end{matrix}\right..
\end{equation} 
Then, the wavelet covariance of $x_m(t) $ and $x_n(t) $ at scale $j$ is related to the cross-spectral density \cite{Percival2000a} as follows
\begin{equation}\label{eq:wave-cov-sdf}
\nu _{m,n}(j) = 2\pi\int_{-\pi}^{\pi}\mathcal{H}_{j}(f)S_{\mathbf{X}}(f)df.
\end{equation}
The wavelet correlation $\rho_{m,n}(j):=\text{cor} \left ( W_m(j,k),W_n(j,k) \right )$ is given by
\begin{equation}\label{eq:wave-cor}
\rho_{m,n}(j)=\frac{\nu_{m,n}(j)}{\sqrt{\nu_{m}(j)\nu_{n}(j)}}.
\end{equation}

\begin{theorem}[Asymptotic wavelet covariance]\label{theorem:asymp-wave-cov}
Suppose that $\mathbf{X}(t)$ is a multivariate FIN process which satisfies the condition \eqref{eq:zt-iid}. Then, the wavelet covariance of $ x_m(t) $ and $ x_n(t) $ at scale $ j $ is approximated by
\begin{equation} \label{eq:wave-cov-asymp}
\nu _{m,n}(j) \approx \gamma_{m,n} \beta  _{m,n} 2^{j(d_m+d_n)} \textup{  as } j \to \infty
\end{equation}
where 
\begin{equation}
\beta  _{m,n} := 2 \cos \left ( \frac{\pi}{2}(d_m - d_n) \right ) \frac{1 - 2^{d_m + d_n -1}}{1-d_m-d_n}(2\pi)^{-d_m-d_n}.
\end{equation}
\end{theorem}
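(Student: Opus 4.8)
The plan is to evaluate the spectral integral \eqref{eq:wave-cov-sdf} directly, combining the band-pass idealization of the wavelet filter with the explicit cross-spectral density \eqref{eq:cross-sdf} of the mFIN. First I would insert the approximation $\mathcal{H}_j(f)\approx 2^j\,\mathbf{1}\{2\pi 2^{-(j+1)}\le |f|\le 2\pi 2^{-j}\}$ into \eqref{eq:wave-cov-sdf} (reading $S_{\mathbf{X}}$ there as the $(m,n)$ cross-spectrum $S_{m,n}$), which collapses the integral onto the two narrow bands $B_j^{\pm}=\pm[2\pi 2^{-(j+1)},2\pi 2^{-j}]$. Because $\mathbf{X}(t)$ is real, the cross-spectral density is Hermitian, $S_{m,n}(-f)=\overline{S_{m,n}(f)}$, so the contribution of $B_j^-$ is the complex conjugate of that of $B_j^+$; adding them replaces $S_{m,n}$ by $2\operatorname{Re}S_{m,n}$ on the single positive band, and everything reduces to estimating $\int_{2\pi 2^{-(j+1)}}^{2\pi 2^{-j}}\operatorname{Re}S_{m,n}(f)\,df$.

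Next I would separate the modulus and phase of \eqref{eq:cross-sdf}. Using the identity $1-e^{if}=2\sin(f/2)\,e^{i(f-\pi)/2}$ on $(0,2\pi)$ gives
\begin{equation}
S_{m,n}(f)=\gamma_{m,n}\,\bigl(2\sin(f/2)\bigr)^{-(d_m+d_n)}\,e^{-i(d_m-d_n)(f-\pi)/2},
\end{equation}
so that $\operatorname{Re}S_{m,n}(f)=\gamma_{m,n}\bigl(2\sin(f/2)\bigr)^{-(d_m+d_n)}\cos\!\bigl(\tfrac{1}{2}(d_m-d_n)(f-\pi)\bigr)$. On $B_j^+$ the frequencies are $O(2^{-j})$, so as $j\to\infty$ I would replace $2\sin(f/2)$ by $f$ and the phase argument by its limiting value $-\tfrac{\pi}{2}(d_m-d_n)$, producing the frequency-independent factor $\cos\!\bigl(\tfrac{\pi}{2}(d_m-d_n)\bigr)$. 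This is exactly the origin of the cosine term in $\beta_{m,n}$.

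It then remains to carry out the elementary power-law integral $\int_{2\pi 2^{-(j+1)}}^{2\pi 2^{-j}} f^{-(d_m+d_n)}\,df$. Writing $s=d_m+d_n<1$, this equals $\frac{(2\pi)^{1-s}}{1-s}\,2^{-j(1-s)}\bigl(1-2^{\,s-1}\bigr)$, which supplies the factor $\frac{1-2^{\,d_m+d_n-1}}{1-d_m-d_n}$, a power of $2\pi$, and the decay $2^{-j(1-s)}$. Multiplying by the $2^j$ from the gain function converts $2^{j}2^{-j(1-s)}$ into $2^{js}=2^{j(d_m+d_n)}$, the claimed scaling, and collecting the remaining constants together with the normalization of \eqref{eq:wave-cov-sdf} assembles the $(2\pi)^{-d_m-d_n}$ and the leading $2$, yielding $\gamma_{m,n}\beta_{m,n}2^{j(d_m+d_n)}$.

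The hard part is not any single computation but justifying the three approximations — the band-pass idealization of $\mathcal{H}_j$, $2\sin(f/2)\approx f$, and the linearization of the phase — as asymptotically negligible. The cleanest route is to show that each replacement perturbs the integrand by a relative factor $1+O(2^{-j})$ uniformly on $B_j^+$ (the small-$f$ expansions of $2\sin(f/2)$ and of the phase cosine contribute relative corrections of order $f^2$ and $f$ respectively, hence $O(2^{-j})$ on $B_j^+$), so the induced error is of order $2^{-j}$ smaller than the leading term $2^{j(d_m+d_n)}$ and vanishes in the limit; the departure of the true squared gain function from its ideal rectangle is controlled by the standard wavelet-filter estimates of \cite{Percival2000a}. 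Care is also needed at the band endpoints, where the simultaneous dependence of the limits and the integrand on $j$ must be tracked so that the ratio of the approximate to the exact integral tends to $1$.
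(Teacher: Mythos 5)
Your proposal is correct and follows essentially the same route as the paper: both evaluate the spectral integral \eqref{eq:wave-cov-sdf} by inserting the band-pass idealization of $\mathcal{H}_j$ and the small-frequency asymptotics of the cross-spectral density \eqref{eq:cross-sdf}, then computing the resulting power-law integral over the scale-$j$ band. If anything you are more complete than the paper's own two-line proof, which displays only the modulus expansion of $\sin^{-d_m-d_n}(f/2)$ and never shows where the $\cos\bigl(\tfrac{\pi}{2}(d_m-d_n)\bigr)$ factor comes from, whereas you derive it explicitly from the phase of $1-e^{if}$ via Hermitian symmetry.
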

\begin{proof}
It is well known the following Taylor series
\begin{multline}\label{eq:est_sin2}
\sin^{-d_m-d_n}\left ( f/2 \right ) \approx \\
\left ( f/2 \right )^{-d_m-d_n}+\frac{d_m+d_n}{6}\left ( \frac{f}{2} \right )^{2-\left ( d_m+d_n \right )}.
\end{multline}
From \eqref{eq:cross-sdf} and \eqref{eq:est_sin2}, 
\begin{eqnarray}\label{eq:est_S(f)}
S_{m,n}\left ( f \right ) &=& \gamma_{m,n}2^{-d_m-d_n}\left ( \left ( \frac{f}{2} \right )^{-d_m-d_n}+ \right.\nonumber\\
&& \left. \frac{d_m+d_n}{6}\left ( \frac{f}{2} \right )^{2 - \left ( d_m+d_n \right )} \right ).
\end{eqnarray}
Substituting $S_{m,n}\left ( f \right )$ in \eqref{eq:wave-cov-sdf} with \eqref{eq:est_S(f)}, we finally get \eqref{eq:wave-cov-asymp}.
\end{proof}

The asymptotic property of wavelet correlation \ref{theorem:asymp-wave-cov} was also proved for more general cases of short memory in \cite{Achard2008}. From \eqref{eq:wave-cor} and \eqref{eq:wave-cov-asymp}, the wavelet correlation of $ x_m(t) $ and $ x_n(t) $ asymptotically converges to
\begin{equation} \label{eq:vfip-scor}
\rho _{m,n}(j) \to \rho_{m,n}^{\infty} := D_{m,n} \Upsilon(d_m,d_n) \;\; \textup{  as } j \to \infty
\end{equation}
where 
\begin{equation}
\Upsilon(d_m,d_n) :=  \frac{\beta _{m,n}}{\sqrt{ \beta _{m,m}\beta _{n,n} }}.
\end{equation}
The asymptotic wavelet correlation $\rho_{m,n}^{\infty}$ is called the \textit{fractal connectivity} of $x_m(t)$ and $x_n(t)$. The ratio of fractal connectivity to nonfractal connectivity is given from \eqref{eq:vfip-scor} by
\begin{equation}
\frac{\rho_{m,n}^{\infty}}{D_{m,n}} = \Upsilon(d_m,d_n).
\end{equation}
The ratio depends just on a pair of long memory parameters as depicted in Fig. \ref{fig:RATIO-FF-NF}. As the difference of long memory parameters increases, fractal connectivity gets away from nonfractal connectivity. On the other hand, fractal connectivity is nearly identical to nonfractal connectivity if the difference of two memory parameters approaches zero.

\begin{figure}[!t]
\centering
\includegraphics[width=8cm]{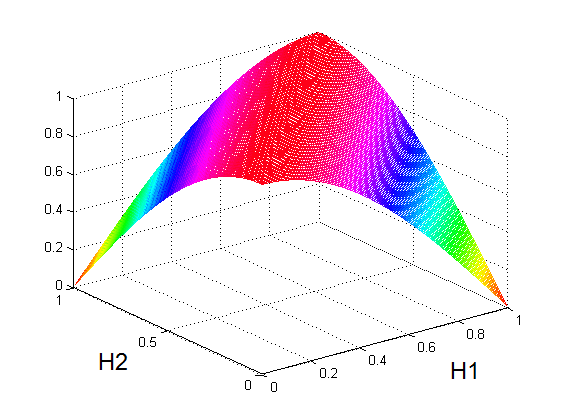}
\caption{The ratio of fractal connectivity to nonfractal connectivity in a bivariate ARFIMA$(0,d,0)$ process over memory parameters. Note that $H_1=d_1+0.5$ and $H_2=d_2+0.5$.}
\label{fig:RATIO-FF-NF}
\end{figure}


\section{Estimation of fractal connectivity and nonfractal connectivity\label{section:estimation}}

The multivariate FIP model indicates that both nonfractal connectivity and fractal connectivity can be well estimated if memory parameters are known. Here we deal with a simple case such that a given time series is approximated as a multivariate FIN process (mFIN). The estimation of nonfractal connectivity and fractal connectivity is organized as two steps. The first step is to estimate memory parameters, and the next step is to compute both fractal connectivity and nonfractal connectivity given the memory parameters. In this section, several wavelet-based techniques which can be exploited in each step are introduced. All of these methods are based on the wavelet transform which are optimal to investigate the properties of long memory processes.

\subsection{Estimation of memory parameters}\label{section:memory-estimation}

\subsubsection{Wavelet least-mean-squares method (LMS)}
By taking logarithm to \eqref{eq:wave-cov-asymp}, we obtain the following linear relationship of wavelet variance in the logarithm scale:
\begin{equation}\label{eq:linear-wave-cov}
\log_{2}\left [ \nu _{m}(j) \right ] \approx 2d_mj + c_{m}.
\end{equation}
It enables us to estimate the memory parameters $\hat{d}_m$ by linear regression over a given scale interval $\mathcal{J}=j_{\text{low}} \to j_{\text{high}}$ on the basis of the biased estimator of wavelet variance given by 
\begin{equation}
\widehat{\nu} _{m}(j) = \frac{1}{n_j 2^j} \sum_{t=1}^{n_j} W_m^2(j,t)
\end{equation}
where $n_j$ is the number of coefficients in scale $j$ except boundary coefficients \cite{Percival2000a}. In a similar manner with \cite{Achard2008}, the optimal scale interval $\mathcal{J}_{\text{opt}}$ for linear regression in \eqref{eq:linear-wave-cov} is globally determined by
\begin{equation}
\mathcal{J}_{\text{opt}} = \text{arg} \min_{\mathcal{J} \subset \mathbb{J}}\sigma_{\text{LS}}^2 (\mathcal{J})
\end{equation}
where $\mathbb{J}$ is the space of all scale intervals, $\Delta J = j_{\text{high}} - j_{\text{low}} + 1$, and
\begin{equation}
\sigma_{\text{LS}}^2 (\mathcal{J}) = \frac{1}{\Delta J}\sum_{j=j_{\text{low}}}^{j_{\text{high}}} \sum_{m=1}^{q}\left \{ \log_2\left [ \widehat{\nu}_{m} \right ] -2\hat{d}_m j - \hat{c}_m \right \}^2.
\end{equation}

\subsubsection{Wavelet maximum-likelihood method (ML)}

The likelihood function for memory parameter $ d_m $ and asymptotic variance $ G_m $ is given by
\begin{equation} \label{eq:likelihood}
L\left ( \hat{d}_m, \hat{\gamma}_m \left | x_m(t) \right. \right ):=\frac{1}{(2\pi)^{N/2}\left | \Sigma _{m} \right |^{1/2}}e^{-\textup{x}^{T}\Sigma _{m}^{-1}\textup{x}/2}.
\end{equation}
The matrix $ \Sigma _{m} $ denotes the covariance matrix of $ x_m(t) $, and can be replaced by $ \widetilde{\Sigma}_{m}:= \mathcal{W}^{T}\Lambda \mathcal{W} $ where $ \mathcal{W} $ is a wavelet transform matrix and $ \Lambda $ is a diagonal matrix which has diagonal elements $ \Lambda_m(j):=\nu_m(j) $ given in \eqref{eq:wave-cov-asymp} for $ j=1,...,J $ as an average value of spectral density function (SDF) over the band $ \left [ 1/2^{j+1}, 1/2^j \right ] $. The reduced log likelihood function can be obtained on the basis of the Brockwell and Davis's method \cite{Brockwell1991}:
\begin{eqnarray} \label{eq:loglikelihood}
\lefteqn{l\left ( \hat{d}_m, \hat{\gamma}_m \left | x_m(t) \right. \right )} \\
&=&-2 \log L\left ( \hat{d}_m, \hat{\gamma}_m \left | x_m(t) \right. \right ) - N \log(2 \pi)  - N \nonumber\\
&=&N\log\left ( \hat{\gamma}_{m,m} \right )+\log(\Lambda_m(J+1))+\sum_{j=1}^{J}N_j \log\left ( \Lambda_m(j)  \right ) \nonumber
\end{eqnarray} 
with $ N=2^J $, $ N_j=N/2^j $, and
\begin{eqnarray} \label{eq:sigma-fip}
\hat{\gamma}_{m,n} &:=&\frac{1}{N}\left ( \frac{V_{m}^TV_{n}}{\Lambda_{m,n}(J+1)}+ \right. \nonumber\\
&& \left. \sum_{j=1}^{J}\frac{1}{\Lambda_{m,n}(j)}\sum_{t=0}^{N_j-1}W_m(j,t) W_n(j,t) \right )
\end{eqnarray}
where $ V_m $ are the scaling coefficients at scale $ J $ and $ W_{j,t} $ is the $ t $-th element of $ j $-th level wavelet coefficients. 
The optimal memory parameter $ \hat{d}_m $ can be estimated by minimizing \eqref{eq:loglikelihood} with respect to $ \hat{d}_m $ \cite{Percival2000a}.

\subsection{Estimation of short-memory covariance}

\subsubsection{The SDF-based method (SDF)}

The estimator $ \hat{\gamma}_{m,n} $ of short memory covariance can be semiparametrically computed by \eqref{eq:sigma-fip}. Since $V_{m} \approx 0$ in stationary long memory processes, the equation can be approximated as

\begin{eqnarray} \label{eq:sigma-fip-approx}
\hat{\gamma}_{m,n} \approx \frac{1}{N}\sum_{j=1}^{J}\frac{1}{\Lambda_{m,n}(j)}\sum_{t=0}^{N_j-1}W_m(j,t) W_n(j,t).
\end{eqnarray}

\subsubsection{The covariance-based method (COV)}

Alternatively, the short memory covariance can be estimated by exploiting the properties such that the sum of wavelet covariances over all scales is identical to the covariance of a time series; i.e.,
\begin{equation}\label{eq:sum-wave-cov}
\widehat{\sigma} _{m,n}^2 = \frac{\text{cov}\left ( V_m,V_n \right )}{N} + \sum_{j=1}^{J}\frac{\text{cov}\left ( W_m(j,t),W_n(j,t) \right )}{2^j}.
\end{equation}
Since $\text{cov}\left ( V_m,V_n \right ) \approx 0$ for a FIN process, the estimator of short memory covariance can be obtained from \eqref{eq:wave-cov-asymp} and \eqref{eq:sum-wave-cov} as follows
\begin{equation}
\hat{\gamma}_{m,n} = \frac{\widehat{\sigma} _{m,n}^2}{2B_{m,n}\sum_{j=1}^{J}2^{(d_m+d_n-1)j}}\left ( 2\pi \right )^{d_m+d_n}.
\end{equation}

\subsubsection{The linearity-based method (LIN)}

The estimator of short memory covariance can be also obtained in the other way based on the linearity of wavelet covariance over scales as follows.
\begin{equation}
\hat{\gamma}_{m,n} = \frac{2^{\hat{c}_{m,n}-1}}{B_{m,n}\cos\left ( \frac{\pi}{2}(d_m-d_n) \right )}(2\pi)^{d_m+d_n}
\end{equation}
where
\begin{equation}
\widehat{c}_{m,n} = \frac{1}{J}\sum_{j=1}^{J}\left [ \log_{2}\widehat{\nu} _{m,n}(j) - (d_m+d_n)j \right ],
\end{equation}
\begin{equation}
B_{m,n} := \frac{1-2^{d_m+d_n-1}}{1-d_m-d_n}.
\end{equation}

\subsection{Estimation of fractal and nonfractal connectivity}

After the estimators for memory parameters $\hat{\mathbf{d}}$ and the short memory covariance $\widehat{\mathbf{\Gamma}}$ are obtained, the nonfractal connectivity $\widehat{D}_{m,n}$ can be estimated by using \eqref{eq:nonfractal-con-def} as follows
\begin{equation}\label{eq:nonfractal-con-est}
\widehat{D}_{m,n} = \frac{\hat{\gamma}_{m,n}}{\sqrt{\hat{\gamma}_{m,m}\hat{\gamma}_{n,n}}}.
\end{equation}
Likewise, fractal connectivity $\widehat{\rho}_{m,n}^{\infty}$ can be estimated from \eqref{eq:vfip-scor} and \eqref{eq:nonfractal-con-est} as follows
\begin{equation} \label{eq:vfip-scor-est}
\widehat{\rho}_{m,n}^{\infty} := \widehat{D}_{m,n} \Upsilon(\hat{d}_m,\hat{d}_n).
\end{equation}


\section{Simulation study\label{section:simulation}}

In this section, the performance of three wavelet-based estimators for nonfractal connectivity was evaluated. We also analyzed the influence of short memory condition, dimension, and length of time series on the estimation of nonfractal connectivity. By combining a connectivity estimator with a memory parameter estimator, six pairs of estimator, such as LMS-LIN, LMS-COV, LMS-SDF, ML-LIN, ML-COV, ML-SDF methods, were finally tested.

\subsection{Setup}

We simulated multivariate ARFIMA$(p,\mathbf{d},0)$ processes that belong to the FIP model. First, the short memory $\mathbf{U}(t)$ in \eqref{eq:order-diff} was given as an ARMA$(p,0)$ process as follows
\begin{equation}\label{eq:sim-zt}
\mathbf{U}(t)=\mathbf{\Phi}_{p}^{-1} (L)\mathbf{A}\mathbf{\varepsilon}(t).
\end{equation}
In \eqref{eq:sim-zt}, $\varepsilon_i(t)$ for $i=1,\cdots,q$ is an \textit{i.i.d.} random variable where 
\begin{equation}
\text{cov}(\varepsilon_m(t),\varepsilon_n(t)) = \begin{cases}
1 & \text{ if } m=n \\ 
0 & \text{ if } m\neq n, 
\end{cases}
\end{equation}

\begin{equation}\label{eq:phi-ar}
\mathbf{\Phi}_{p} (L)=\begin{pmatrix}
\sum_{i=1}^{p}\varphi _{1,i}L^{i} &  & 0\\ 
 & \ddots & \\ 
0 &  & \sum_{i=1}^{p}\varphi _{q,i}L^{i}
\end{pmatrix},
\end{equation}
and
\begin{equation}\label{eq:innovation}
\mathbf{A} = \begin{pmatrix}
1 & 0 & \cdots & \cdots & 0\\ 
0 & 1 & a & \cdots & a\\ 
\vdots & a & \ddots &  & \vdots\\ 
\vdots & \vdots &  & 1 & a\\ 
0 & a & \cdots & a & 1
\end{pmatrix}.
\end{equation}
If we set
\begin{equation}\label{eq:innovation-condition}
a = \frac{1\pm \sqrt{1-b  \rho }}{b}
\end{equation}
where $b=\rho (q-2) - (q-3)$, the short memory correlation is forced to be $D_{m,n}=\rho$ for $m,n>1$ and $m\neq n$. Afterwards, the memory parameters $\mathbf{d}$ were equally distributed over $d \in (-1/2,1/2)$, and the multivariate ARMA$(p,0)$ process was filtered by the LM filter defined in \eqref{eq:long-memory-filter}.

\subsection{Effects of short memory condition}

To study the effects of short memory conditions on the performance of estimators, we performed Monte Carlo simulations with 100 repetitions of four-dimensional ARFIMA$(p,d,0)$ processes under four different types of short memory condition in \eqref{eq:phi-ar} and \eqref{eq:innovation}: 
\begin{itemize}
\item[(1A)] $\mathbf{A}=\mathbf{I}$ and $\varphi _{k,i}=0$
\item[(1B)] $\mathbf{A}=\mathbf{I}$, $\varphi _{k,1}=0.9$ and $\varphi _{k,i}=0$ for $i>1$
\item[(2A)] $\mathbf{A}=\mathbf{A}_0$ and $\varphi _{k,i}=0$
\item[(2B)] $\mathbf{A}=\mathbf{A}_0$, $\varphi _{k,1}=0.9$ and $\varphi _{k,i}=0$ for $i>1$
\end{itemize}
where $\mathbf{d} = \left \{ 0.2,0.4,0.6,0.8 \right \}$ and $\mathbf{A}_0$ was set with $\rho=0.3$ in \eqref{eq:innovation-condition}. In the conditions (1A) and (1B), each short memory process is statistically independent of each other while the conditions (2A) and (2B) let short memory processes be cross-correlated. On the other hand, the conditions (1B) and (2B) let each process be autocorrelated.

In Fig. \ref{fig:short-mem-cond}, all methods were weak in the conditions (1B) and (2B) where short memory parts were more auto-correlated; the relative decrease in consistency for two cases was common to all six methods. The deteriorated performance in (1B) and (2B) is a foreseeable result since these short memory conditions no longer follow the assumption of mFIN in \eqref{eq:zt-iid} adopted for our proposed estimators. Hence, our proposed estimators were not efficient when the set of short memory signals cannot be approximated as a multivariate $i.i.d.$ process.

\begin{figure}[!t]
\centerline{
\subfigure[LMS-LIN]{\includegraphics[width=3.8cm]{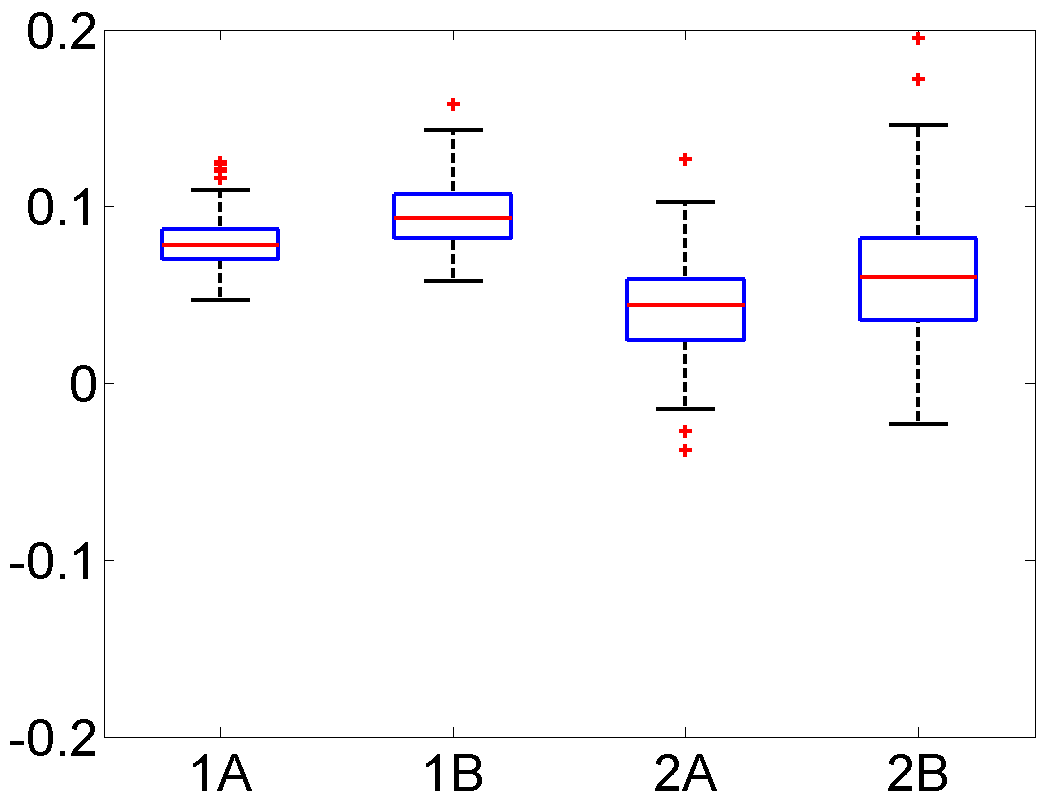}
\label{fig:short-mem-cond:R-LMS-LIN-smc}}
\hfil
\subfigure[ML-LIN]{\includegraphics[width=3.8cm]{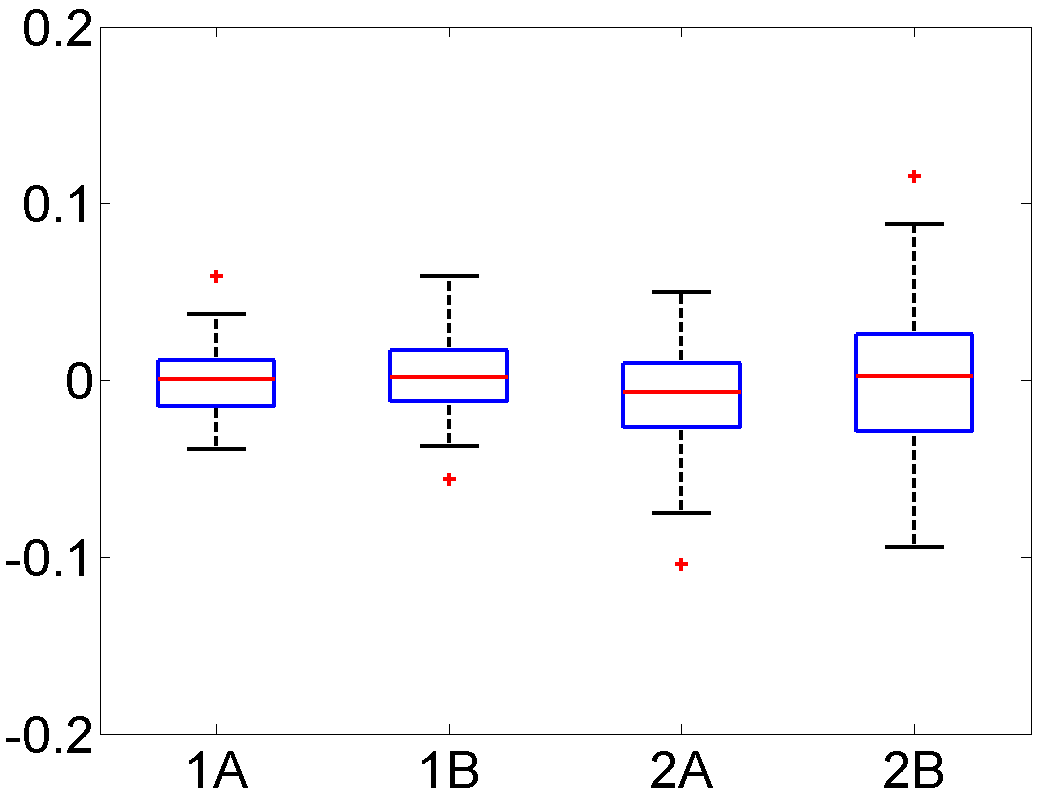}
\label{fig:short-mem-cond:R-ML-LIN-smc}}
}
\centerline{
\subfigure[LMS-COV]{\includegraphics[width=3.8cm]{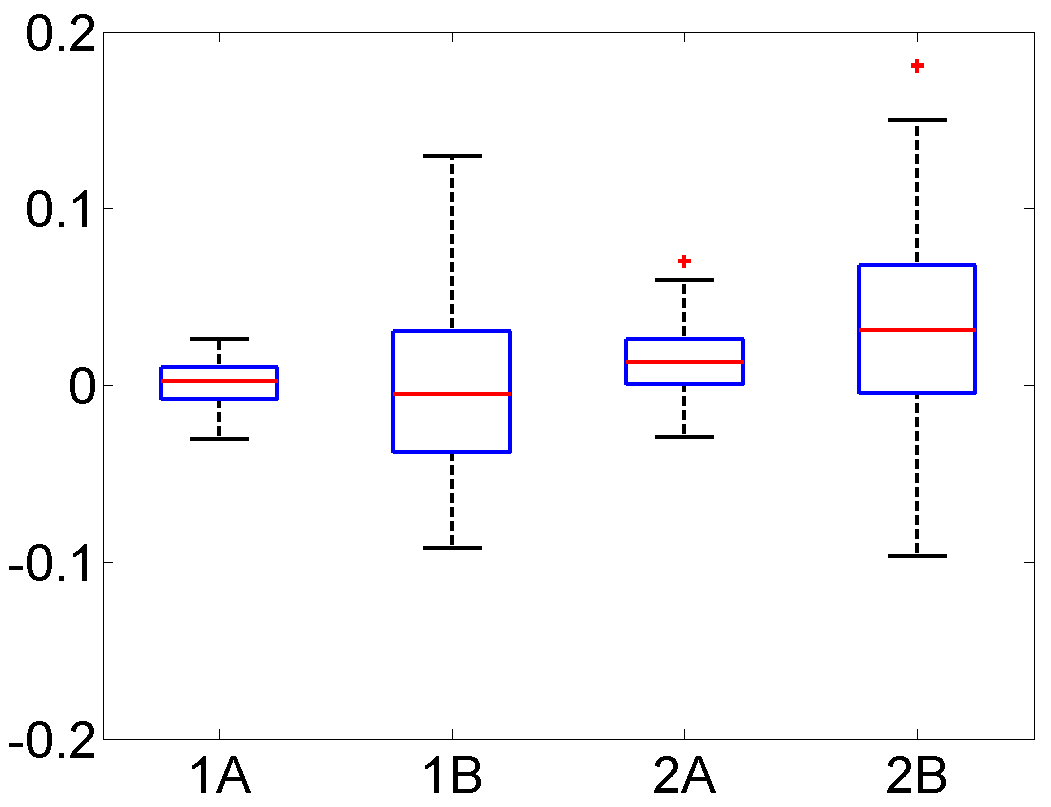}
\label{fig:short-mem-cond:R-LMS-COV-smc}}
\hfil
\subfigure[ML-COV]{\includegraphics[width=3.8cm]{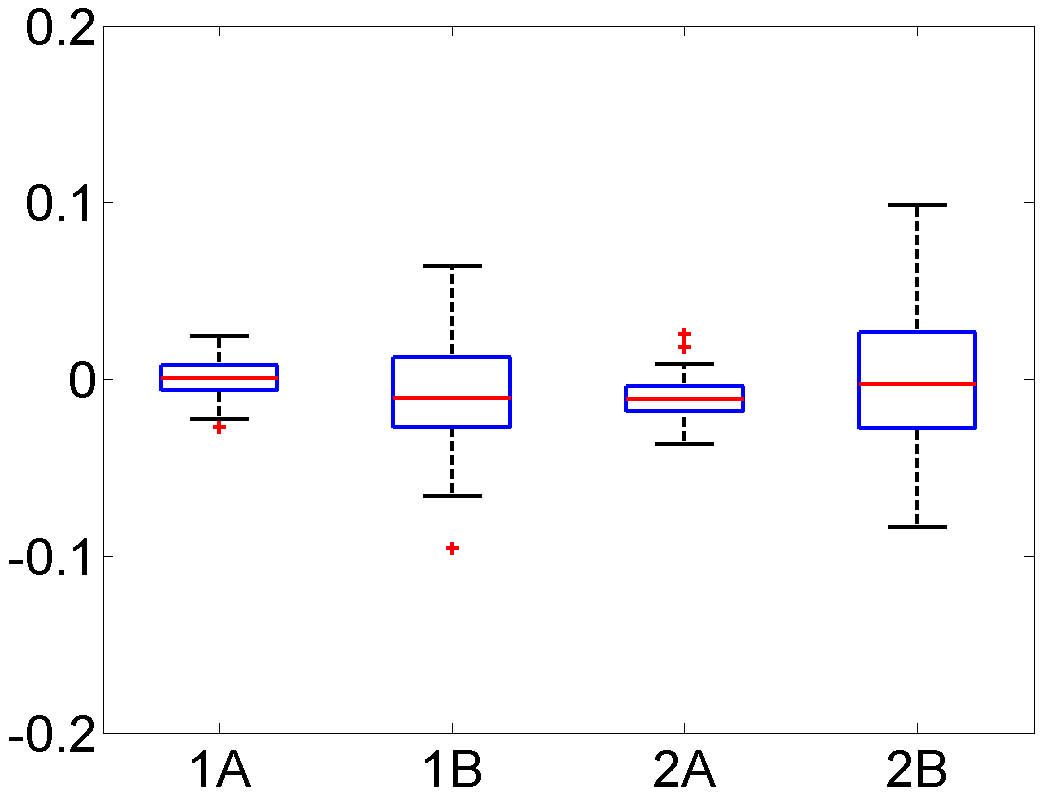}
\label{fig:short-mem-cond:R-ML-COV-smc}}
}
\centerline{
\subfigure[LMS-SDF]{\includegraphics[width=3.8cm]{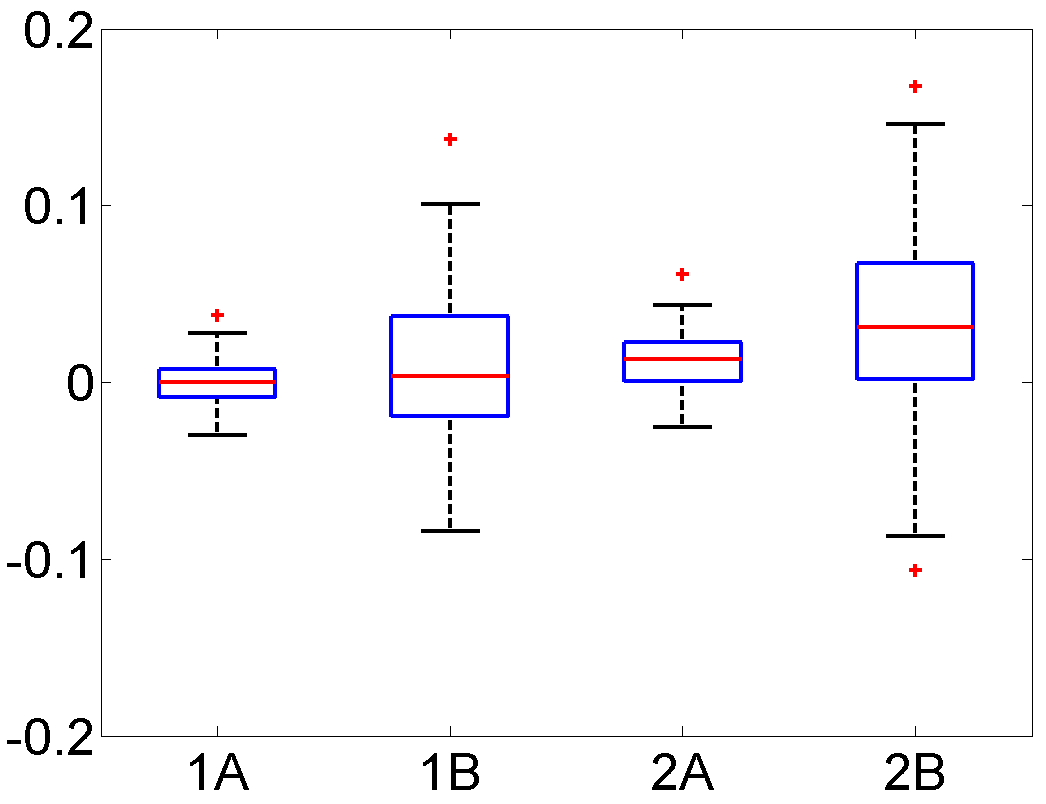}
\label{fig:short-mem-cond:R-LMS-SDF-smc}}
\hfil
\subfigure[ML-SDF]{\includegraphics[width=3.8cm]{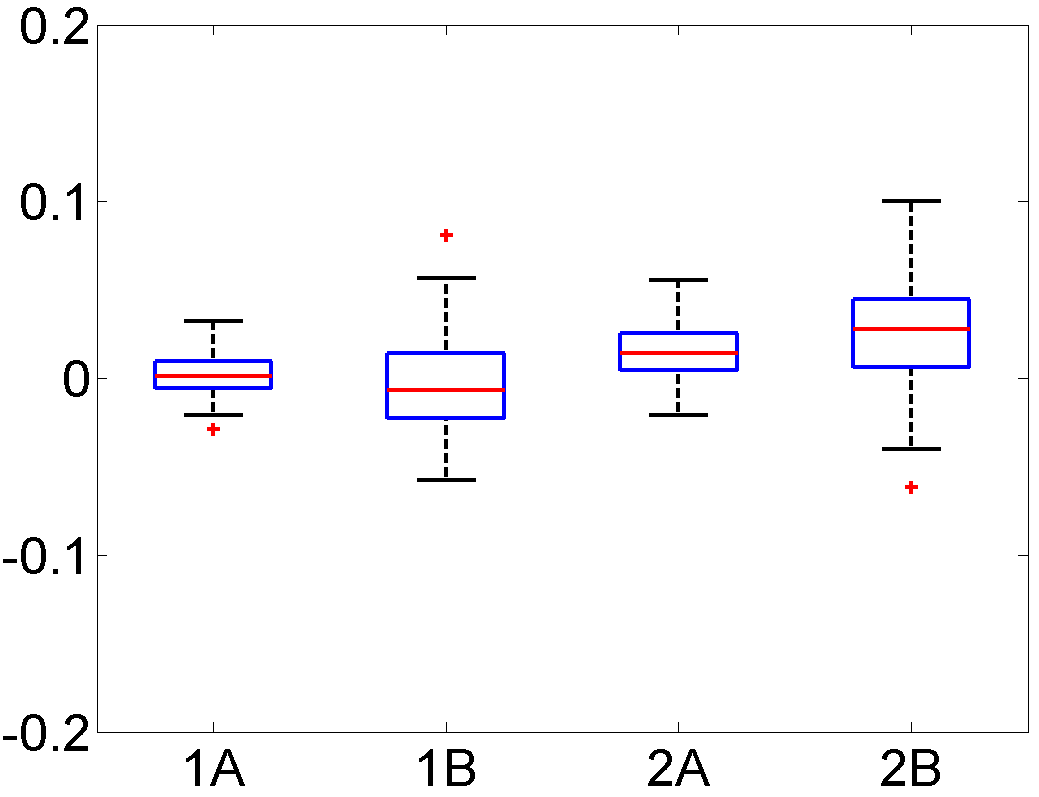}
\label{fig:short-mem-cond:R-ML-SDF-smc}}
}
\caption{Box plots of bias in estimation of nonfractal connectivity for simulated ARFIMA$(p,d,0)$ processes with different short memory conditions.}
\label{fig:short-mem-cond}
\end{figure}

In the cases (2A) and (2B) where short memory processes are cross-correlated, LMS-LIN, ML-LIN, and ML-COV estimators were not significantly biased. However, Fig \ref{fig:nonfractal-dim} and \ref{fig:nonfractal-dim-highcor} show that all estimators except LMS-LIN are more biased as the correlation of short memory increases when the dimension (the number of time series) is large.

In these experiments, the short memory processes were given as a multivariate ARMA$(0,0)$ process with innovation which fulfills \eqref{eq:innovation} and \eqref{eq:innovation-condition}. The short memory correlation coefficient in \eqref{eq:innovation-condition} was set by either $\rho=0.2$ or $\rho=0.8$. In Fig \ref{fig:nonfractal-dim} and \ref{fig:nonfractal-dim-highcor}, the absolute bias of estimators except LMS-LIN increased as the dimension increases, and the increasing rate of bias was faster when the short memory correlation was high. Hence, the high correlation of short memory results in the deterioration of estimation performance. All the above results show that the bias of estimators tends to be associated with cross-correlation of short memory parts, but also that the consistency tends to be related to auto-correlation of each short memory part. In summary, the performance of our proposed estimators for nonfractal connectivity is manifestly influenced by the short memory conditions.

\begin{figure}[!t]
\centerline{
\subfigure[LMS-LIN]{\includegraphics[width=3.8cm]{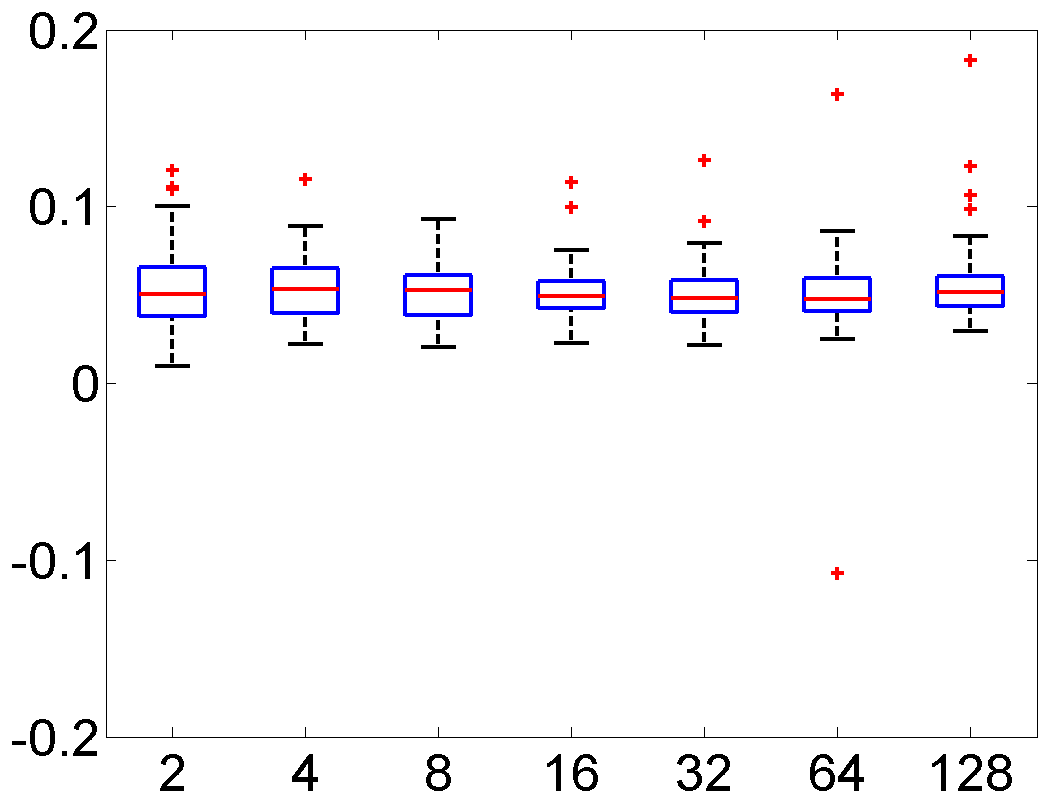}
\label{fig:nonfractal-dim:R-LMS-LIN}}
\hfil
\subfigure[ML-LIN]{\includegraphics[width=3.8cm]{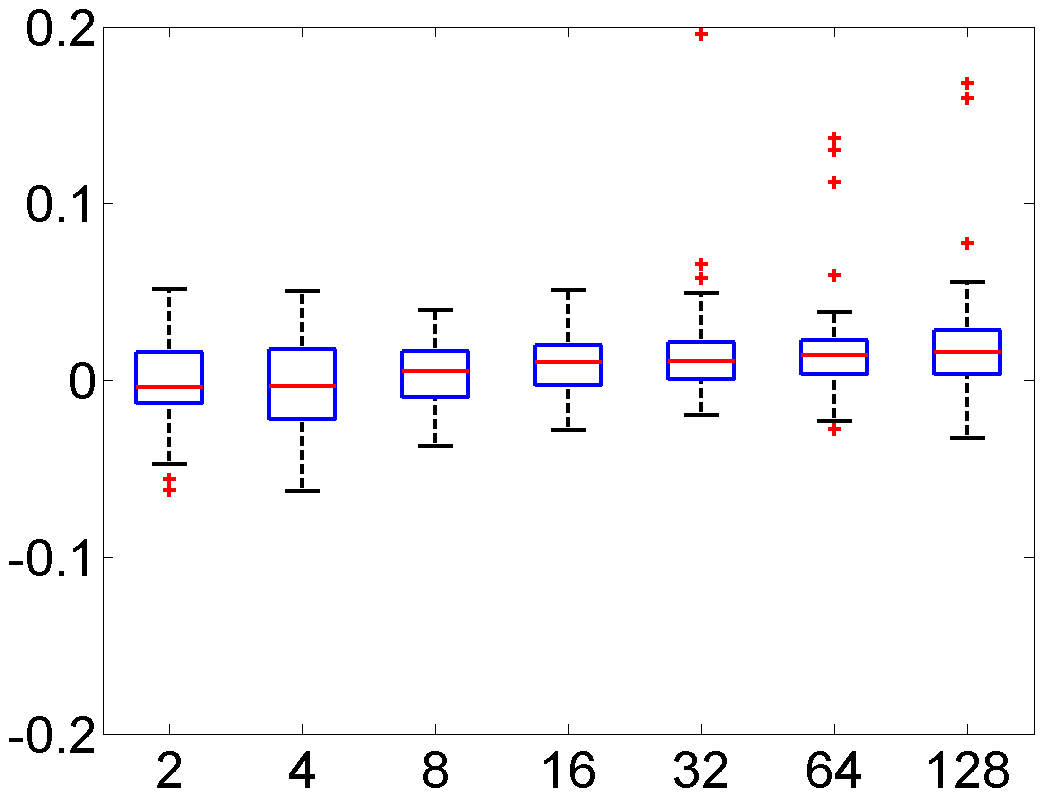}
\label{fig:nonfractal-dim:R-ML-LIN}}
}
\centerline{
\subfigure[LMS-COV]{\includegraphics[width=3.8cm]{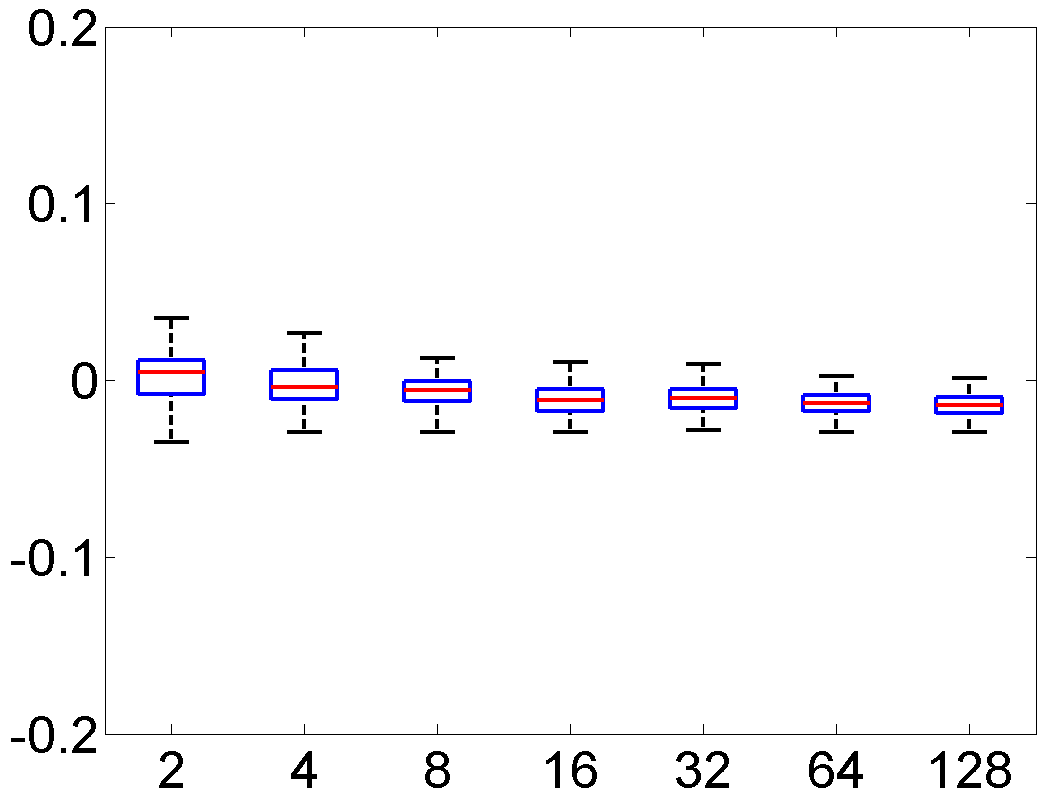}
\label{fig:nonfractal-dim:R-LMS-COV}}
\hfil
\subfigure[ML-COV]{\includegraphics[width=3.8cm]{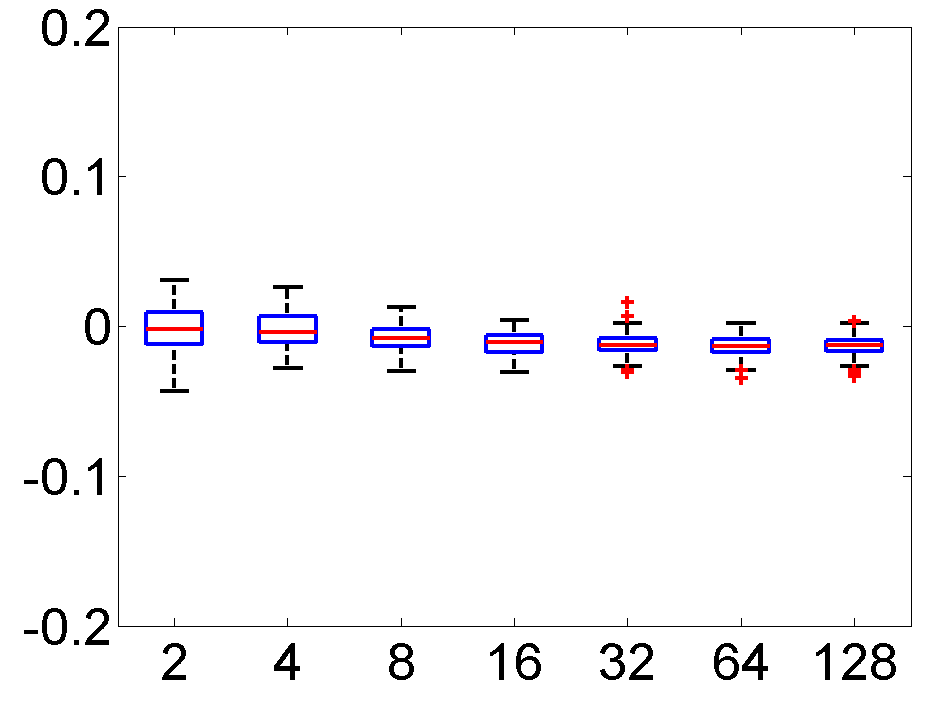}
\label{fig:nonfractal-dim:R-ML-COV}}
}
\centerline{
\subfigure[LMS-SDF]{\includegraphics[width=3.8cm]{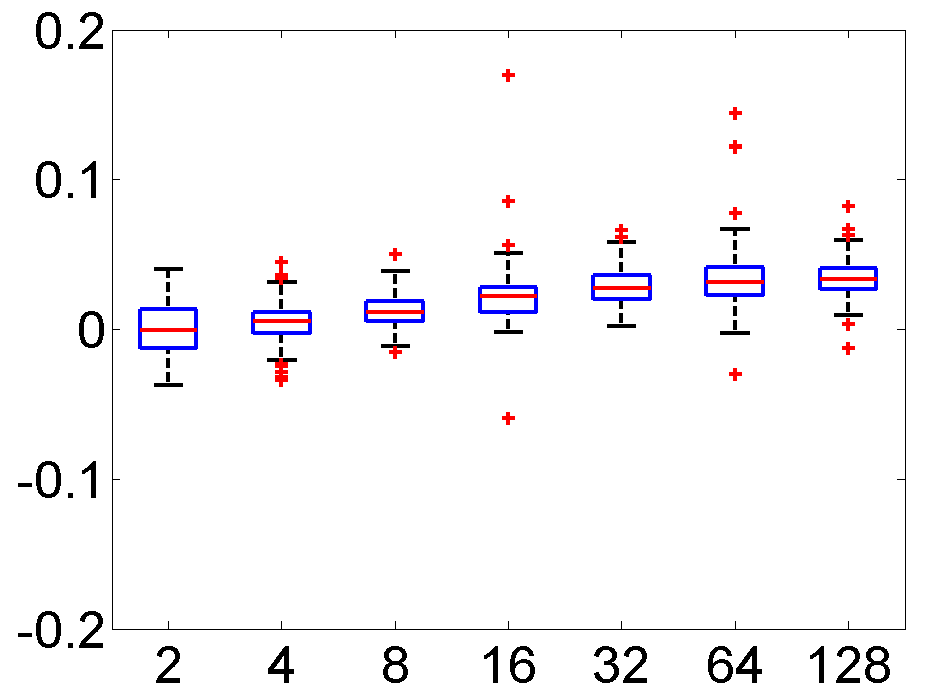}
\label{fig:nonfractal-dim:R-LMS-SDF}}
\hfil
\subfigure[ML-SDF]{\includegraphics[width=3.8cm]{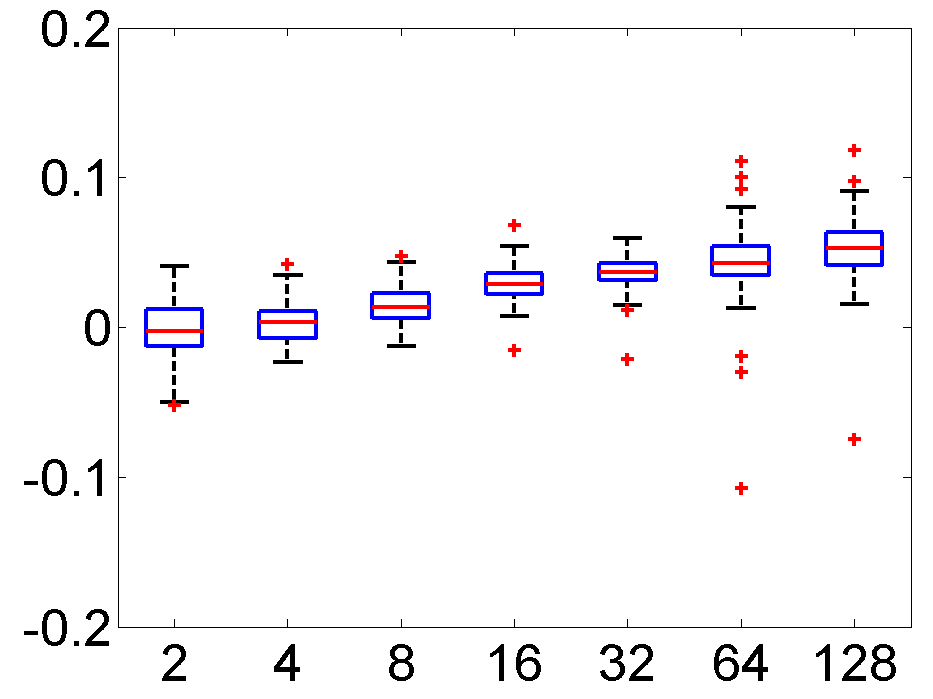}
\label{fig:nonfractal-dim:R-ML-SDF}}
}
\caption{Box plots of bias in estimation of nonfractal connectivity for simulated ARFIMA$(0,d,0)$ processes according to variable dimensions when the short memory correlation is $\rho=0.2$.}
\label{fig:nonfractal-dim}
\end{figure}

\begin{figure}[!t]
\centerline{
\subfigure[LMS-LIN]{\includegraphics[width=3.8cm]{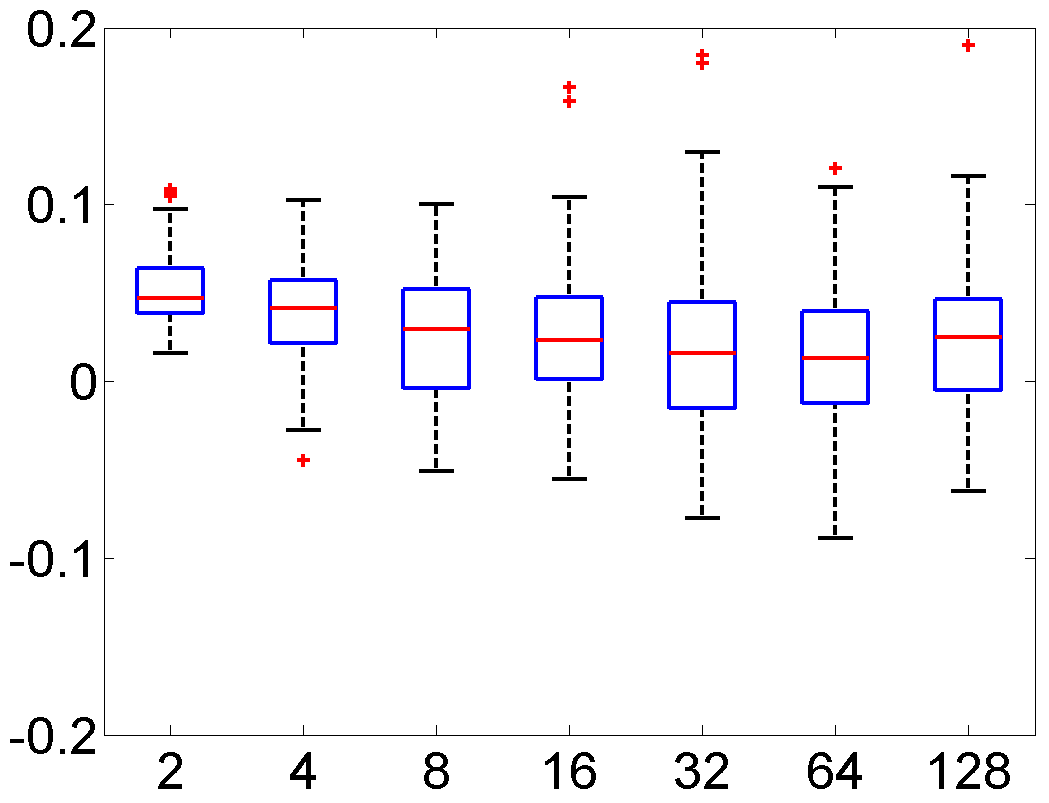}
\label{fig:nonfractal-dim-highcor:R-LMS-LIN}}
\hfil
\subfigure[ML-LIN]{\includegraphics[width=3.8cm]{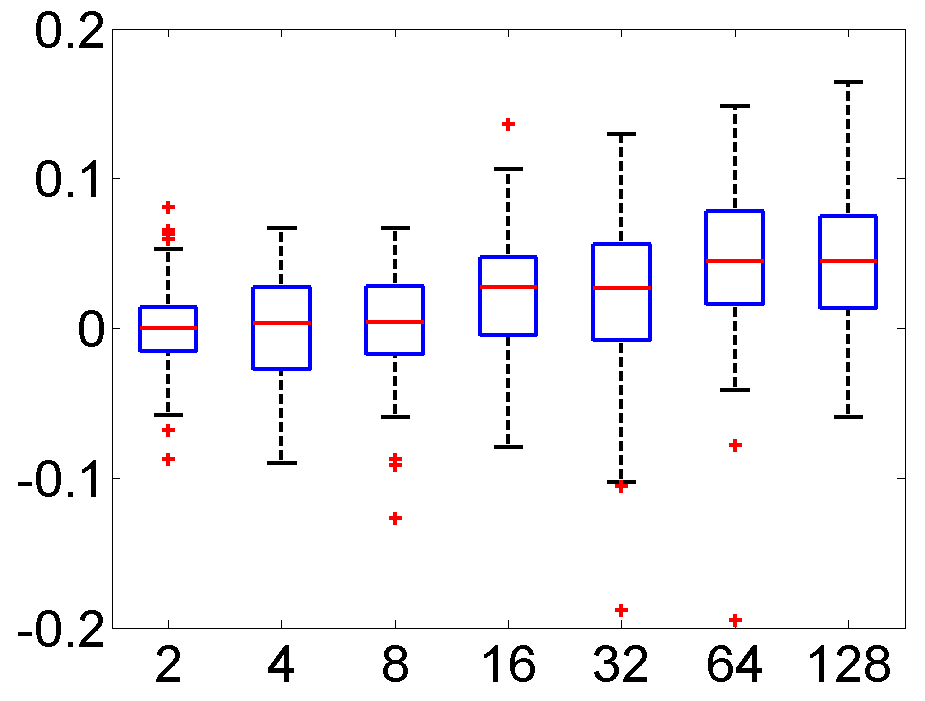}
\label{fig:nonfractal-dim-highcor:R-ML-LIN}}
}
\centerline{
\subfigure[LMS-COV]{\includegraphics[width=3.8cm]{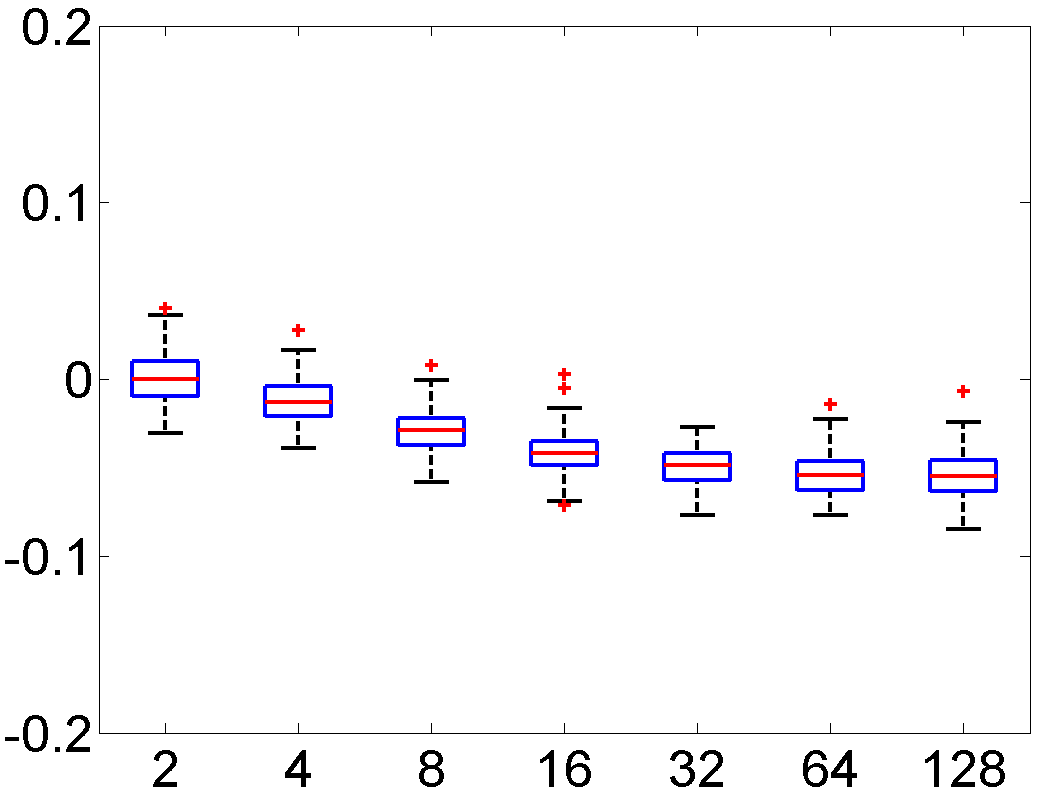}
\label{fig:nonfractal-dim-highcor:R-LMS-COV}}
\hfil
\subfigure[ML-COV]{\includegraphics[width=3.8cm]{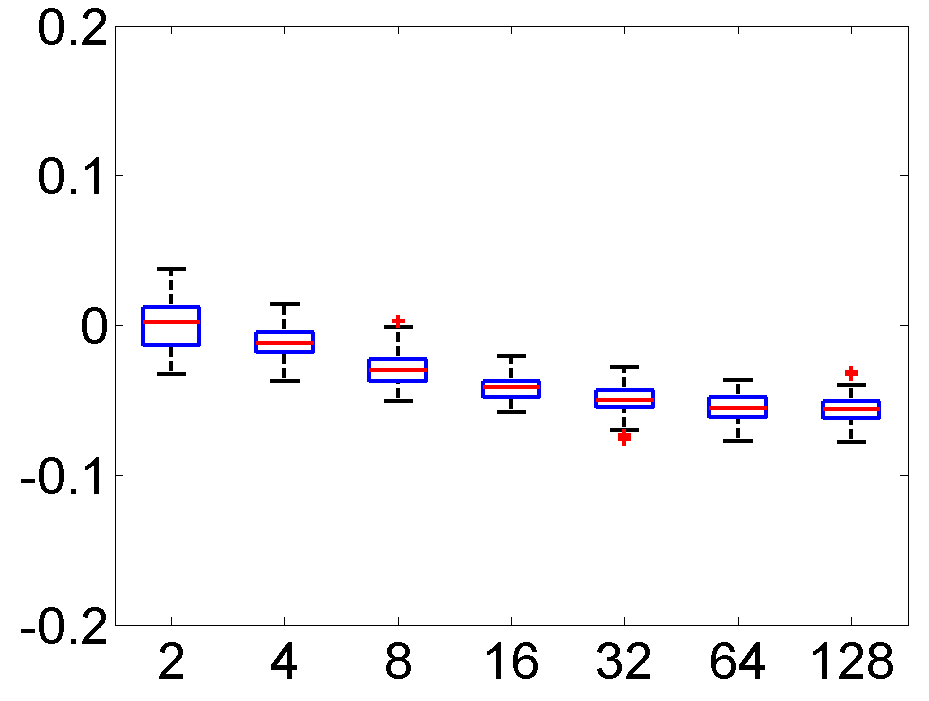}
\label{fig:nonfractal-dim-highcor:R-ML-COV}}
}
\centerline{
\subfigure[LMS-SDF]{\includegraphics[width=3.8cm]{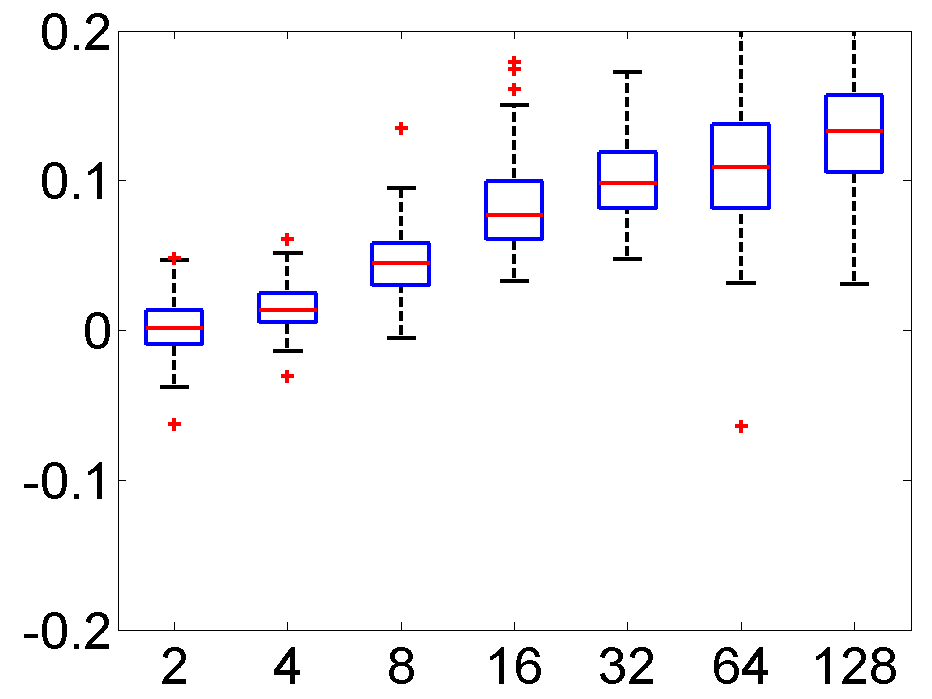}
\label{fig:nonfractal-dim-highcor:R-LMS-SDF}}
\hfil
\subfigure[ML-SDF]{\includegraphics[width=3.8cm]{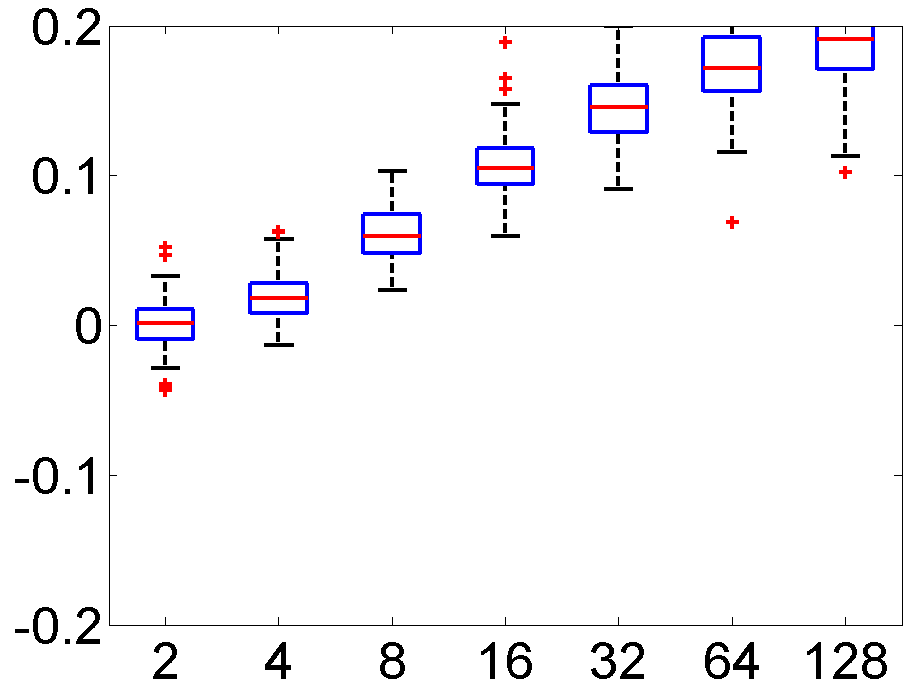}
\label{fig:nonfractal-dim-highcor:R-ML-SDF}}
}
\caption{Box plots of bias in estimation of nonfractal connectivity for simulated ARFIMA$(0,d,0)$ processes according to variable dimensions when the short memory correlation is $\rho=0.8$.}
\label{fig:nonfractal-dim-highcor}
\end{figure}

\subsection{Effects of dimension and length of time series}

Fig \ref{fig:nonfractal-dim} and \ref{fig:nonfractal-dim-highcor} show that the estimators of nonfractal connectivity except LMS-LIN depend on the dimension of time series. The LMS-LIN method was relatively less biased even in high dimension than other methods, however the consistency was large. On the other hand, the other methods were more biased as the dimension increases; especially the increase in bias was more prominent in the LMS-SDF and ML-SDF methods. Hence, the increase in the number of time series leads to the poor performance of estimating nonfractal connectivity. Nevertheless, the ML-COV method has the best consistency even in high dimension and high correlation.

Fig. \ref{fig:nonfractal-len} shows that the performance of nonfractal connectivity estimation is associated with the length of time series. In multivariate ARFIMA$(0,d,0)$ processes with zero cross-correlation between short memory parts, the consistency of all methods was improved as the length of time series increases. However, the LMS-LIN method had greater bias than ML-LIN when the length of time series was small. In Fig. \ref{fig:short-mem-cond}, the LMS-LIN and ML-LIN methods had different performance in all tested short memory conditions even though they have the common approach to estimating nonfractal connectivity. These results imply that the estimation of nonfractal connectivity can be affected by the estimator of memory parameter.

\begin{figure}[!t]
\centerline{
\subfigure[LMS-LIN]{\includegraphics[width=3.8cm]{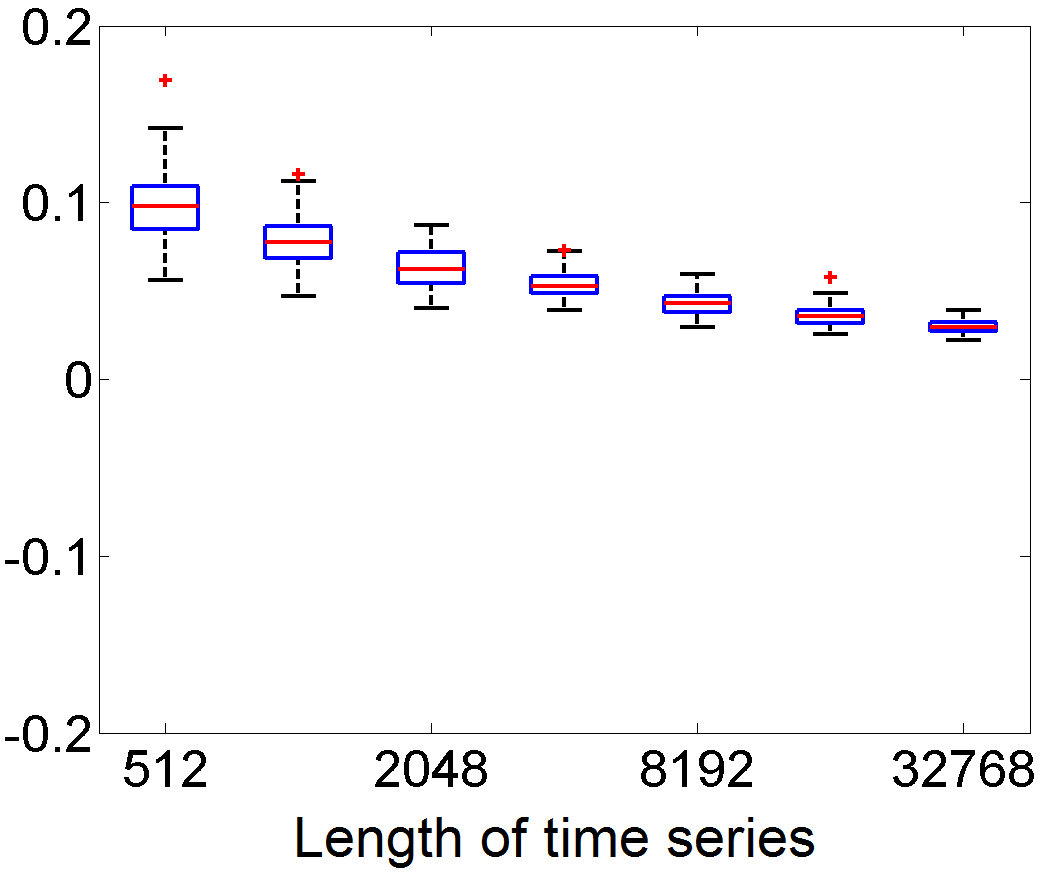}
\label{fig:nonfractal-len:R-LMS-LIN-len}}
\hfil
\subfigure[ML-LIN]{\includegraphics[width=3.8cm]{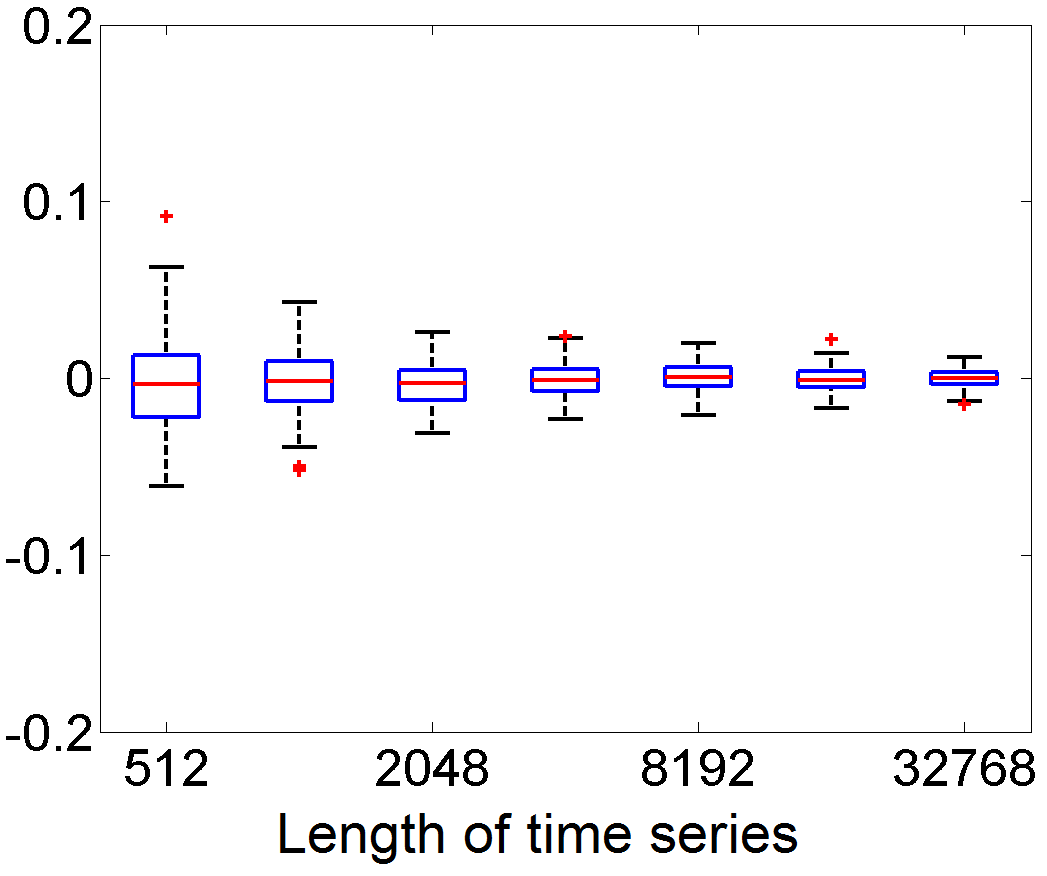}
\label{fig:nonfractal-len:R-ML-LIN-len}}
}
\centerline{
\subfigure[LMS-COV]{\includegraphics[width=3.8cm]{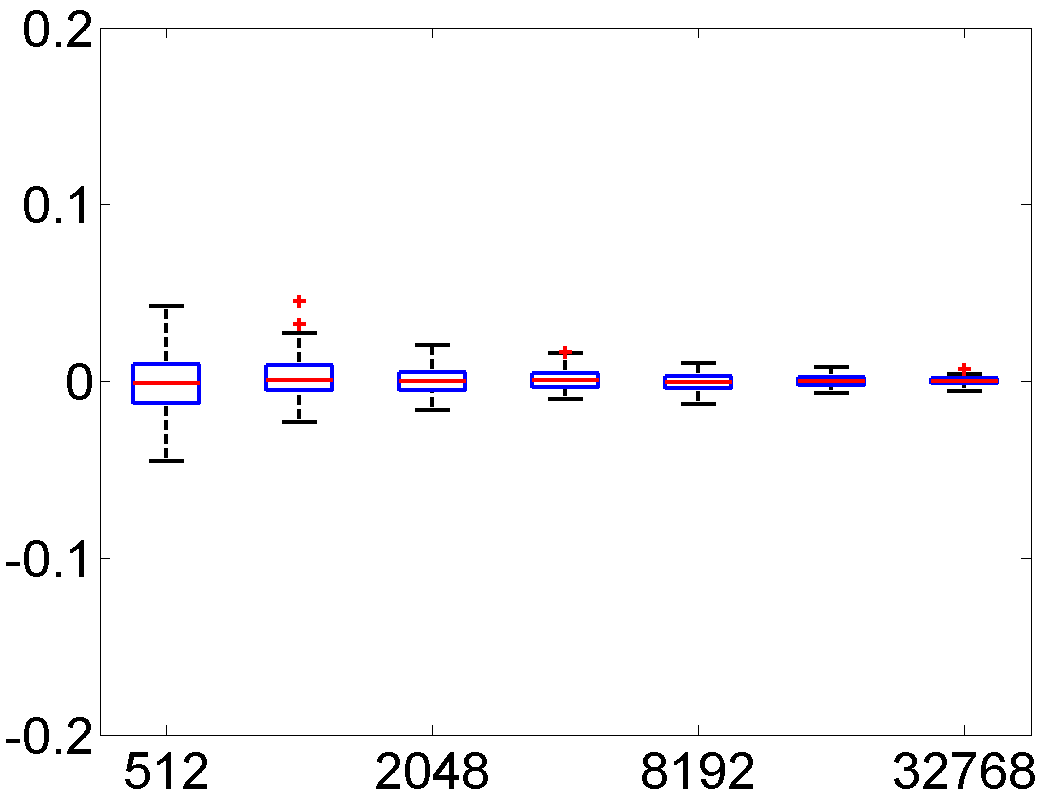}
\label{fig:nonfractal-len:R-LMS-COV-len}}
\hfil
\subfigure[ML-COV]{\includegraphics[width=3.8cm]{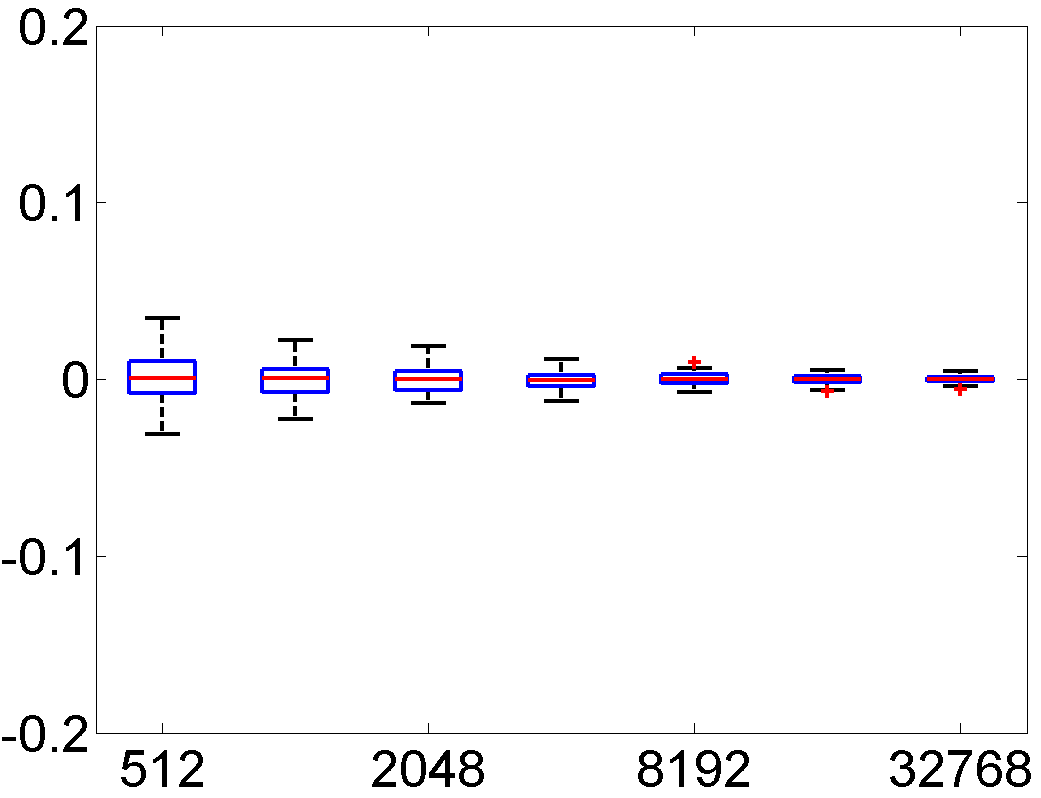}
\label{fig:nonfractal-len:R-ML-COV-len}}
}
\centerline{
\subfigure[LMS-SDF]{\includegraphics[width=3.8cm]{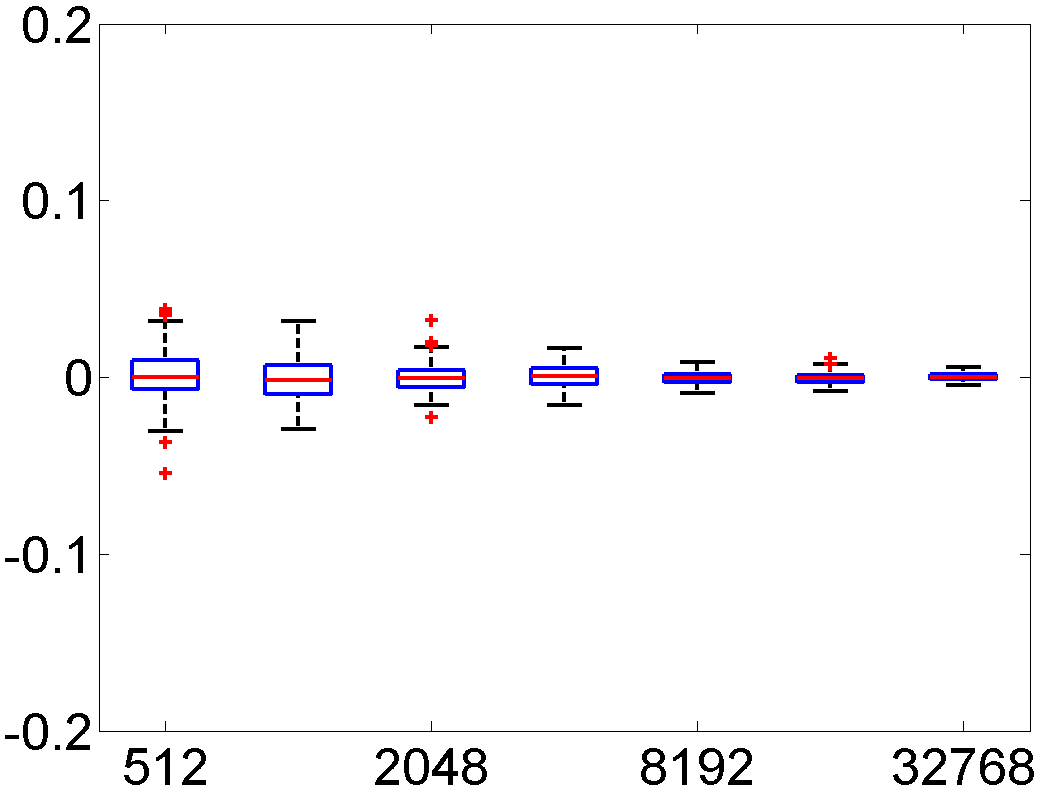}
\label{fig:nonfractal-len:R-LMS-SDF-len}}
\hfil
\subfigure[ML-SDF]{\includegraphics[width=3.8cm]{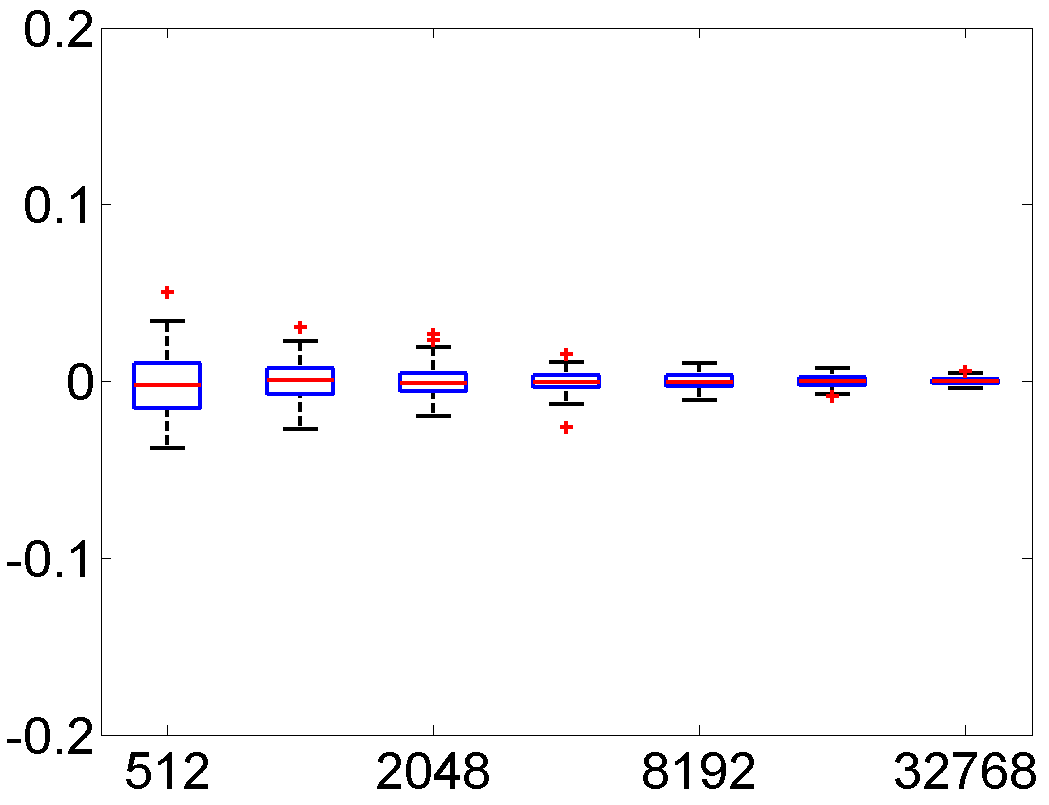}
\label{fig:nonfractal-len:R-ML-SDF-len}}
}
\caption{Box plots of bias in estimation of nonfractal connectivity for simulated ARFIMA$(0,d,0)$ processes according to variable length of time series when the short memory correlation is $\rho=0$.}
\label{fig:nonfractal-len}
\end{figure}

Concluding Fig. \ref{fig:short-mem-cond}-\ref{fig:nonfractal-len}, the ML-COV method would be the best choice as an estimator of nonfractal connectivity since it exhibits small bias in various cases of short memory conditions and high consistency even in high dimension and high short memory correlation. As shown in Fig \ref{fig:nonfractal-example}, the connectivity matrix estimated by the ML-COV method has smaller difference with the original short memory correlation matrix than those produced by other methods have. However, it is theoretically expected that the covariance-based method including ML-COV may be more sensitive to additive noises than other methods. In the case that the signal-to-noise ratio is low, the ML-LIN method may have better performance than ML-COV.

\renewcommand{\thesubfigure}{}
\begin{figure}[!t]
\centerline{
\subfigure[(a) Short memory correlation]{\includegraphics[width=3.8cm]{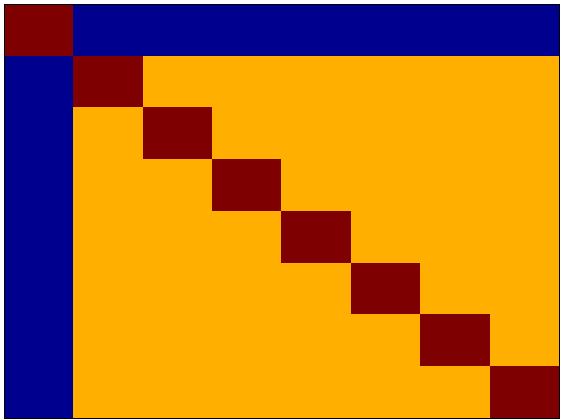}
\label{fig:nonfractal-len:R-LMS-LIN-len}}
\hfil
\subfigure[(b) ML-LIN]{\includegraphics[width=3.8cm]{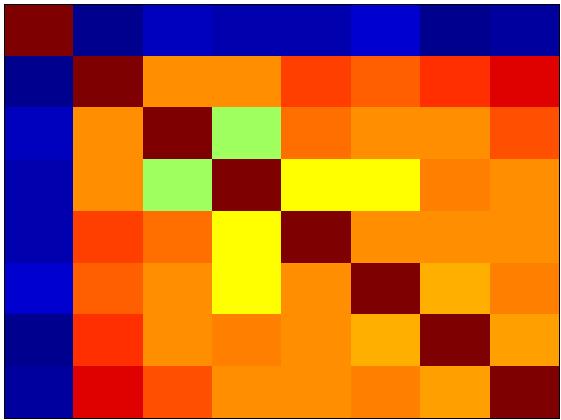}
\label{fig:nonfractal-len:R-ML-LIN-len}}
}
\centerline{
\subfigure[(c) ML-COV]{\includegraphics[width=3.8cm]{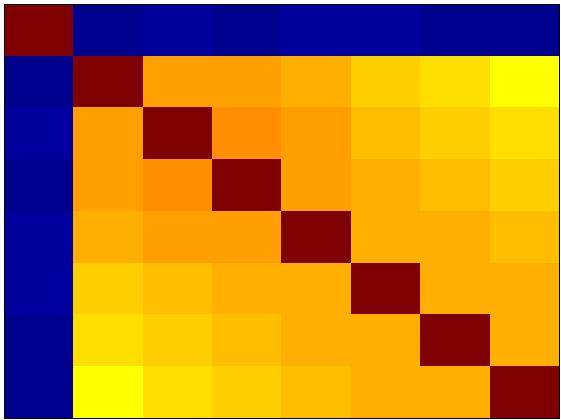}
\label{fig:nonfractal-len:R-LMS-COV-len}}
\hfil
\subfigure[(d) ML-SDF]{\includegraphics[width=3.8cm]{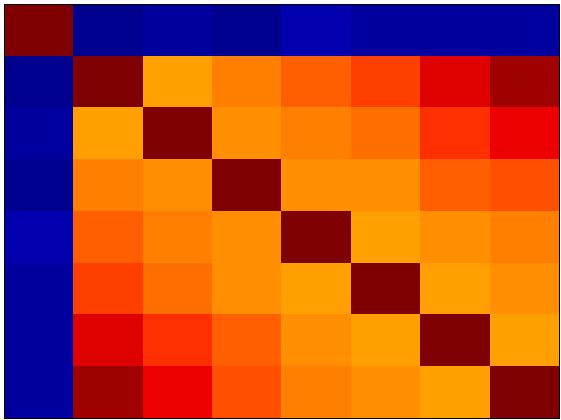}
\label{fig:nonfractal-len:R-ML-COV-len}}
}
\centerline{
\subfigure[]{\includegraphics[width=4.5cm]{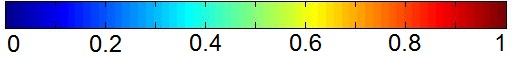}}
}
\caption{An example of nonfractal connectivity estimation in a simulated 8-dimensional ARFIMA$(0,d,0)$ process.}
\label{fig:nonfractal-example}
\end{figure}

\section{Resting state functional MRI\label{section:fmri}}

We applied our proposed estimator of nonfractal connectivity to a resting state fMRI data of the anesthetized rat brain taken from 4.7T MRI scanner. We manually separated the 15 ROIs from the anatomical MRI image, and mapped them into the $64 \times 64 \times 8$ volume of blood-oxygen-level-dependent (BOLD) signals in fMRI by using the FLIRT (FMRIB's Linear Image Registration Tool). These ROIs correspond to aCG, CPu-L, CPu-R, MEnt+MEntV-L, MEnt+MEntV-R, HIP-L, HIP-R, S1-L, S1-R, S2-L, S2-R, LSI+MS, TE-L, TE-R, and TH \cite{Paxinos2007}. 

We computed both the Pearson correlation and nonfractal connectivity by using the ML-LIN method. In Fig. \ref{fig:fmri-conn}, the estimated nonfractal connectivity has significantly different patterns from the ordinary correlation. As shown in the modularized graph representations of Pearson correlation and nonfractal connectivity obtained after thresholding the number of edges by $20$, the nonfractal functional network tends to exhibit increased modularity while the correlation-based functional network exhibits high randomness.

\renewcommand{\thesubfigure}{}
\begin{figure}[!t]
\centerline{
\subfigure[(a) Nonfractal connectivity]{\includegraphics[width=4.5cm]{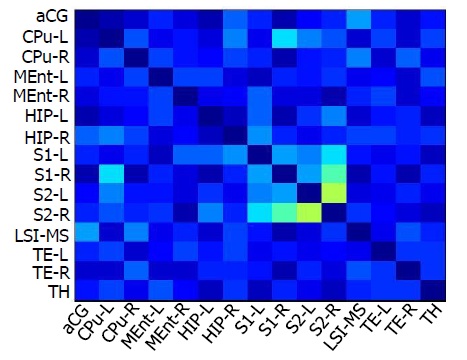}
\label{fig:fmri-conn:fMRI-NF}}
\hfil
\subfigure[(b)]{\includegraphics[width=3.5cm]{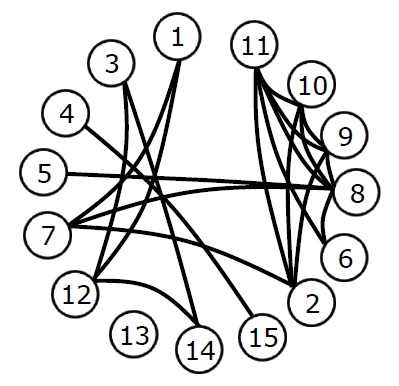}
\label{fig:fmri-conn:GRAPH-NF}}
}
\centerline{
\subfigure[(c) Pearson correlation]{\includegraphics[width=4.5cm]{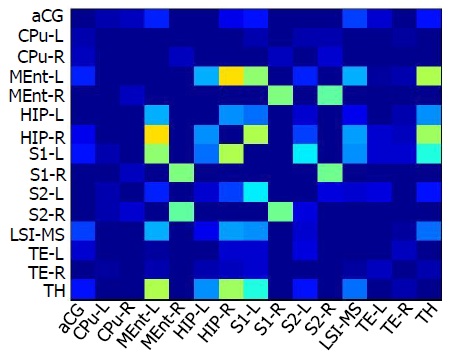}
\label{fig:fmri-conn:fMRI-COR}}
\hfil
\subfigure[(d)]{\includegraphics[width=3.5cm]{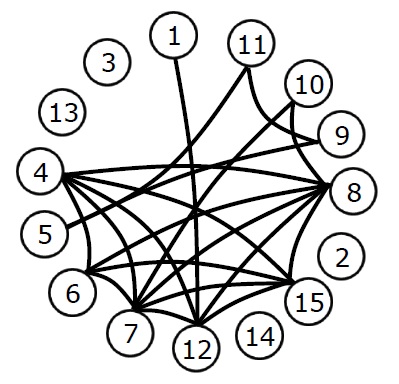}
\label{fig:fmri-conn:GRAPH-COR}}
}
\centerline{
\subfigure[]{\includegraphics[width=4.5cm]{colorbar}}
}
\caption{An example of nonfractal connectivity estimation in resting state fMRI data of the anesthetized rat brain.}
\label{fig:fmri-conn}
\end{figure}

\section{Conclusion\label{section:conclusion}}

In this article, we modeled a resting state neuroimaging signal as a fractionally integrated process and introduced the \textit{nonfractal connectivity} as a novel concept of resting state functional connectivity. There is no empirical evidence which demonstrates that the nonfractal connectivity reflects the correlation of spontaneous neuronal population activities. Through empirical analyses and computational modeling of resting state neuroimaging data, the association of nonfractal connectivity and neuronal population activities needs to be clarified in the future. 

We also proposed several wavelet-based methods for estimating nonfractal and fractal connectivity. These estimators are optimal under the assumption that the given signals can be approximated by an mFIN while neuroimaging signals would have various short memory and can be seriously contaminated by physiological or system noises. Hence, the estimators need to be improved in a variety of short memory conditions and the existence of additive noises. It would be also valuable to figure out the biological mechanism of fractal behavior which gives rise to the distortion in functional connectivity. All these challenges may give us insight into the relationship of resting state functional connectivity and fractal behavior.


\section*{Acknowledgment}

This neuroimaging research was supported by the special project grant from the Leibniz Institute for Neurobiology, and SA was partly funded by ANR 2010 JCJC 0302 01. We thank Frank Angenstein for fMRI data.



\bibliographystyle{unsrt}
\bibliography{ijcnn2012-arxiv}

\begin{thebibliography}{10}

\bibitem{Fox2007a}
Michael~D Fox and Marcus~E Raichle.
\newblock {Spontaneous fluctuations in brain activity observed with functional
  magnetic resonance imaging.}
\newblock {\em Nature reviews. Neuroscience}, 8(9):700--11, 2007.

\bibitem{Laufs2003}
H~Laufs, K~Krakow, P~Sterzer, E~Eger, a~Beyerle, a~Salek-Haddadi, and
  a~Kleinschmidt.
\newblock {Electroencephalographic signatures of attentional and cognitive
  default modes in spontaneous brain activity fluctuations at rest.}
\newblock {\em Proceedings of the National Academy of Sciences of the United
  States of America}, 100(19):11053--8, September 2003.

\bibitem{DeLuca2006}
M~{De Luca}, C~F Beckmann, N~{De Stefano}, P~M Matthews, and S~M Smith.
\newblock {fMRI resting state networks define distinct modes of long-distance
  interactions in the human brain.}
\newblock {\em NeuroImage}, 29(4):1359--67, 2006.

\bibitem{Musso2010}
F~Musso, J~Brinkmeyer, a~Mobascher, T~Warbrick, and G~Winterer.
\newblock {Spontaneous brain activity and EEG microstates. A novel EEG/fMRI
  analysis approach to explore resting-state networks.}
\newblock {\em NeuroImage}, 52(4):1149--61, October 2010.

\bibitem{Deco2011}
Gustavo Deco, Viktor~K. Jirsa, and Anthony~R. McIntosh.
\newblock {Emerging concepts for the dynamical organization of resting-state
  activity in the brain}.
\newblock {\em Nature Reviews Neuroscience}, 12(1):43--56, January 2011.

\bibitem{Zarahn1997b}
E.~Zarahn, GK~Aguirre, and M.~D'Esposito.
\newblock {Empirical analyses of BOLD fMRI statistics}.
\newblock {\em NeuroImage}, 5(3):179--197, 1997.

\bibitem{Stam2004}
C.J. Stam and E.A. de~Bruin.
\newblock {Scale-free dynamics of global functional connectivity in the human
  brain}.
\newblock {\em Human Brain Mapping}, 22(2):97--109, June 2004.

\bibitem{VandeVille2010}
Dimitri {Van de Ville}, Juliane Britz, and Christoph~M Michel.
\newblock {EEG microstate sequences in healthy humans at rest reveal scale-free
  dynamics.}
\newblock {\em Proceedings of the National Academy of Sciences of the United
  States of America}, 107(42):18179--84, October 2010.

\bibitem{Expert2011}
Paul Expert, Renaud Lambiotte, D.R. Chialvo, Kim Christensen, H.J. Jensen, D.J.
  Sharp, and Federico Turkheimer.
\newblock {Self-similar correlation function in brain resting-state functional
  magnetic resonance imaging}.
\newblock {\em Journal of The Royal Society Interface}, 8(57):472, 2011.

\bibitem{Cordes2001a}
D~Cordes, V.M.~M Haughton, K~Arfanakis, J.D.~D Carew, P.A.~a Turski, C.H.~H
  Moritz, M.A.~a Quigley, and M.E.~E Meyerand.
\newblock {Frequencies contributing to functional connectivity in the cerebral
  cortex in" resting-state" data}.
\newblock {\em American Journal of Neuroradiology}, 22(7):1326, August 2001.

\bibitem{Birn2006}
Rasmus~M Birn, Jason~B Diamond, Monica~A Smith, and Peter~A Bandettini.
\newblock {Separating respiratory-variation-related fluctuations from
  neuronal-activity-related fluctuations in fMRI}.
\newblock {\em NeuroImage}, 31(4):1536--48, July 2006.

\bibitem{West1999}
B~West, R~Zhang, A~Sanders, S~Miniyar, J~Zuckerman, and B~Levine.
\newblock {Fractal fluctuations in cardiac time series}.
\newblock {\em Physica A: Statistical Mechanics and its Applications},
  270(3-4):552--566, August 1999.

\bibitem{Teich1997}
M~C Teich, C~Heneghan, S~B Lowen, T~Ozaki, and E~Kaplan.
\newblock {Fractal character of the neural spike train in the visual system of
  the cat.}
\newblock {\em Journal of the Optical Society of America. A, Optics, image
  science, and vision}, 14(3):529--46, March 1997.

\bibitem{Mazzoni2007}
Alberto Mazzoni, Fr\'{e}d\'{e}ric~D Broccard, Elizabeth Garcia-Perez, Paolo
  Bonifazi, Maria~Elisabetta Ruaro, and Vincent Torre.
\newblock {On the dynamics of the spontaneous activity in neuronal networks.}
\newblock {\em PloS one}, 2(5):e439, January 2007.

\bibitem{Allegrini2009}
Paolo Allegrini, Danilo Menicucci, Remo Bedini, Leone Fronzoni, Angelo
  Gemignani, Paolo Grigolini, Bruce~J. West, and Paolo Paradisi.
\newblock {Spontaneous brain activity as a source of ideal 1/f noise}.
\newblock {\em Physical Review E}, 80(6):1--13, December 2009.

\bibitem{Beran1994}
Jan Beran.
\newblock {\em {Statistics for long-memory processes}}.
\newblock CRC Press, 1994.

\bibitem{Bullmore2001}
E~Bullmore, C~Long, J~Suckling, J~Fadili, G~Calvert, F~Zelaya, T~a Carpenter,
  and M~Brammer.
\newblock {Colored noise and computational inference in neurophysiological
  (fMRI) time series analysis: resampling methods in time and wavelet domains.}
\newblock {\em Human brain mapping}, 12(2):61--78, February 2001.

\bibitem{Bullmore2004a}
Ed~Bullmore, Jalal Fadili, Voichita Maxim, Levent Sendur, Brandon Whitcher,
  John Suckling, Michael Brammer, and Michael Breakspear.
\newblock {Wavelets and functional magnetic resonance imaging of the human
  brain.}
\newblock {\em NeuroImage}, 23 Suppl 1:S234--49, 2004.

\bibitem{Maxim2005}
Voichiţa Maxim, Levent Sendur, Jalal Fadili, John Suckling, Rebecca Gould, Rob
  Howard, Ed~T Bullmore, and L.~\c{S}endur.
\newblock {Fractional Gaussian noise, functional MRI and Alzheimer's disease}.
\newblock {\em NeuroImage}, 25(1):141--158, March 2005.

\bibitem{Achard2006}
Sophie Achard, Raymond Salvador, Brandon Whitcher, John Suckling, and E~T
  Bullmore.
\newblock {A resilient, low-frequency, small-world human brain functional
  network with highly connected association cortical hubs.}
\newblock {\em The Journal of neuroscience : the official journal of the
  Society for Neuroscience}, 26(1):63--72, 2006.

\bibitem{Wink2008}
Alle-Meije Wink, Ed~Bullmore, Anna Barnes, Frederic Bernard, and John Suckling.
\newblock {Monofractal and multifractal dynamics of low frequency endogenous
  brain oscillations in functional MRI}.
\newblock {\em Human Brain Mapping}, 29(7):791--801, July 2008.

\bibitem{Hosking1981a}
JRM~J.R.M. Hosking.
\newblock {Fractional differencing}.
\newblock {\em Biometrika}, 68(1):165, 1981.

\bibitem{Achard2008}
Sophie Achard, Danielle~S. Bassett, Andreas Meyer-Lindenberg, and Ed~Bullmore.
\newblock {Fractal connectivity of long-memory networks}.
\newblock {\em Physical Review E}, 77(3):1--12, 2008.

\bibitem{Granger1980}
C.W.J. Granger and R.~Joyeux.
\newblock {An introduction to long-memory time series models and fractional
  differencing}.
\newblock {\em Journal of time series analysis}, 1(1):15--29, 1980.

\bibitem{Percival2000a}
Donald~B. Percival and Andrew~T. Walden.
\newblock {\em {Wavelet methods for time series analysis}}.
\newblock Cambridge University Press, 2000.

\bibitem{Brockwell1991}
P~J Brockwell, R~A Davis, J~O Berger, S~E Fienberg, J~Gani, K~Krickeberg,
  I~Olkin, and B~Singer.
\newblock {\em {Time Series: Theory and Methods}}.
\newblock Series in Statistics. Springer, 1991.

\bibitem{Paxinos2007}
George Paxinos and Charles Watson.
\newblock {\em {The rat brain in stereotaxic coordinates}}.
\newblock Academic Press, 2007.

\end{thebibliography}
%


\end{document}